\newtheorem{theorem}{Theorem}[section]
\newtheorem{corollary}[theorem]{Corollary}
\newtheorem{proposition}[theorem]{Proposition}
\theoremstyle{definition}
\newcommand{\remove}[1]{}
\newcommand{\luk}{\L u\-ka\-s\-ie\-wicz}
\DeclareMathOperator{\Boole}{\mathsf{B}}
\DeclareMathOperator{\interval}{[0,1]}
\DeclareMathOperator{\Spec}{\mathsf{Spec}}
 \DeclareMathOperator{\dist}{\mathsf{dist}}
\DeclareMathOperator{\prim}{\mathsf{prim}}
\DeclareMathOperator{\McNm}{\mathcal M([0,1]^{\it m})}
\DeclareMathOperator{\McNn}{\mathcal M([0,1]^{\it n})}
 \title[Polytime reductions of AF-algebraic problems]
{Polytime reductions of  AF-algebraic problems}
\author{Daniele Mundici}
\address[D. Mundici]{Department of
Mathematics and Computer Science  ``Ulisse Dini'' \\
University of Florence\\
Viale Morgagni 67/A \\
I-50134 Florence \\
Italy}
\email{daniele.mundici@unifi.it }
\date{\today}
\begin{document}

\keywords{AF algebra,    Elliott classification,    
Elliott local semigroup, Murray-von Neumann order, 
Grothendieck $K_0$ group, MV  algebra,  liminary C*-algebra,
 Effros-Shen algebra, Behnke-Leptin algebras,
Farey AF algebra.}

 \subjclass[2010]{Primary:   46L35.
Secondary:    
06D35,  06F20,   19A49,  46L80, 47L40.}

\begin{abstract} We assess the computational complexity
of several decision problems concerning  (Murray-von Neumann)
equivalence classes of projections of AF-algebras  whose
Elliott classifier is lattice-ordered. We construct polytime
reductions among many of these problems.
\end{abstract}

\maketitle

\section{introduction}

Let $ \mathfrak A$ be an AF-algebra, \cite{bra}.
The partial addition  + in  Elliott's  local
semigroup \cite{ell} of $\mathfrak A$   is uniquely 
extendible to a   total   operation
that preserves all algebraic and order 
properties of +  iff   the
Murray-von Neumann order $\preceq $  of 
 $\mathfrak A$  is a lattice,  for short,
$\mathfrak A$ is an {\it AF$\ell$-algebra}. 
See Theorem \ref{theorem:multi}.
AF$\ell$-algebras  have a preeminent role in the 
  literature on AF-algebras.
The Elliott classifier 
$E(\mathfrak A)$ of any AF$\ell$-algebra
$\mathfrak A$  is an {\it MV-algebra}, i.e., a Lindenbaum
algebra   in \luk\ logic \L$_\infty,  $ \cite{cigdotmun, mun11}.
All countable MV-algebras arise as  $E(\mathfrak A)$ 
for some AF$\ell$-algebra $\mathfrak A.$
Since $\mathfrak A$ is a quotient of the universal
AF$\ell$-algebra  $\mathfrak M$ of \cite[$\S 8$]{mun-jfa},
and $E(\mathfrak M)$ is the free MV-algebra over
countably many generators, each
  \L$_\infty$-formula  $\phi$ {\it naturally}
codes an equivalence class $\phi^\mathfrak A$ of projections in $\mathfrak A$.

We equip the Elliott monoid $E(\mathfrak A)$
of every AF$\ell$-algebra $\mathfrak A$
with a partial order $\phi^\mathfrak A
\sqsubseteq\psi^\mathfrak A$ intuitively meaning that all
projections in  $\phi^\mathfrak A$ are ``less eccentric''
than those in   $\psi^\mathfrak A$.
We prove that  $p$  is central  
iff  its  (always Murray-von Neumann)
 equivalence class  $[p]$ is  $\sqsubseteq$-minimal  iff 
it is a  (Freudenthal) 
characteristic element  of
$K_0(\mathfrak A)$ iff  $[p]\wedge [1-p]=0$.
Among others, we  consider the following decision   problems:
\,\,\, (a) $\phi^\mathfrak A\,\text{?`}
\, \hspace{-1mm}=?\,\,\psi^\mathfrak A$,   \,\, \,(b) 
  $\phi^\mathfrak A\,\text{?`}
  \hspace{-1mm}\preceq?\,\, \psi^\mathfrak A$,\,\,
  (c)   $\phi^\mathfrak A\,\text{?`}\hspace{-1mm}
  \sqsubseteq?\,\,\psi^\mathfrak A,$ \,\,
  (d)  $\phi^\mathfrak A\text{?`}\hspace{-1mm}=?\,0,$
 \,\,  (e) Is $\phi^\mathfrak A$  central?\,\,
We prove that problems
 (a)-(d) for 
 $\mathfrak A$
  have the same computational complexity, up to a polytime
  reduction.
 If  $\mathfrak A$ is simple, or if  $\mathfrak A$ has no
 quotient isomorphic to $\mathbb C$, then also
 problem  (e)  for $\mathfrak A$   has
 the same complexity as (a)-(d).
The complexity of each problem 
(a)-(e) is polytime for many
 relevant AF-algebras in the
literature, including 
the  Behncke-Leptin algebras  $\mathcal A_{m,n}$ and
every  Effros-Shen algebra  $\mathfrak F_\theta$, for 
$\theta\in \interval\setminus \mathbb Q$
  a real algebraic integer,  or for $\theta = 1/{\rm e}$. 
  
  For every AF$\ell$-algebra $\mathfrak A$ we let 
$\prim(\mathfrak A)$ denote the
set of primitive  (always closed and  two-sided)
  ideals of $\mathfrak A$ equipped   with the
Jacobson topology. By
\cite[3.8]{bra},  an ideal $\mathfrak P$
of  $\mathfrak A$ is primitive iff it 
is {\it prime}, in the
sense that 
whenever  ideals $\mathfrak P_1,
\mathfrak P_2$ of  $\mathfrak A$
satisfy  $\mathfrak P_1\cap \mathfrak P_2
= \mathfrak P$ then either $\mathfrak P_1
= \mathfrak P$ or $\mathfrak P_2= \mathfrak P$.

  For any MV-algebra $A$ we let
 $ \Spec(A)$
 denote the space of prime ideals of $A$ endowed with
 the Zariski (hull-kernel) topology,   (\cite[4.14]{mun11}). 
 As shown in   \cite[1.2.14]{cigdotmun}, 
 \begin{equation}
 \label{equation:subdirect} 
 \bigcap  \Spec(A)= \{0\}.
 \end{equation}
 
The basic properties of AF$\ell$-algebras are summarized by
the following two results:
 
\begin{theorem}
\label{theorem:multi}  
Let $\mathfrak A$ be an AF-algebra.
 
\smallskip 
{\rm(i)} \cite[Theorem 1]{munpan}
Elliott's partial addition $+$ in 
$L(\mathfrak A)$   has 
  at most one
 extension to an associative, commutative, 
monotone operation  $\oplus \colon
L(\mathfrak A)^2\to L(\mathfrak A)$ 
such that for each projection  $p \in  \mathfrak A$, 
$ [1_\mathfrak A - p ]$   
 is the smallest  $[q]\in L( \mathfrak A)$ satisfying
     $[p] \oplus [q]=[1_\mathfrak A].$ 
The  semigroup  $(S(\mathfrak A),\oplus)$
expanding  
$L(\mathfrak A)$  exists iff    $\mathfrak A$   
is an AF$\ell$-algebra.

\medskip
{\rm(ii)} \cite{ell}    Let    $\mathfrak A_1$   and    
$\mathfrak A_2$    be  AF$\ell$-algebras.    
For  each     $j = 1, 2$    let      $\oplus_j$    
be the  extension  of Elliott's addition  given by (i).    
Then the semigroups  $(S(\mathfrak A_1), \oplus_1)$    and   
 $(S(\mathfrak A_2), \oplus_2)$   are isomorphic iff so are      
$\mathfrak A_1$   and    
$\mathfrak A_2$.
 
\medskip
{\rm(iii)}   \cite[Theorem 2, Proposition 2.2]{munpan} 
$(S(\mathfrak A), \oplus)$ 
has  the richer  structure of  a  monoid
$(E(\mathfrak A), 0, ^*,  \oplus)$ 
with an involution   
$[p]^* =  [1_\mathfrak A - p ]$. 
The Murray-von Neumann  
lattice-order  of equivalence classes of projections 
    $[p],[q]\in E(\mathfrak A)$  is definable by  the
     involutive monoidal operations of $E(\mathfrak A)$ 
     upon setting
    $[p] \vee [q]=([p]^*\oplus [q])^*\oplus [q]$ and
     $[p]\wedge [q]= ([p]^*\vee [q]^*)^*.$   
We say that
$E(\mathfrak A)$ is the {\em Elliott monoid} of $\mathfrak A$.

\medskip
{\rm(iv)}  \cite[Theorem 3.9]{mun-jfa}  
The map 
 $\mathfrak A \mapsto  (E(\mathfrak A),
 0, ^*, \oplus)$ 
   is a  bijective correspondence
   between isomorphism classes of    AF$\ell$-algebras
        and  isomorphism classes of  
countable abelian monoids with a unary operation   $^*$  satisfying the equations:
$  x^{**}=x,  \,   0^* \oplus x =  0^*,   \,
\mbox{ and }   \,   (x^* \oplus y)^* \oplus
 y = (y^*\oplus x)^* \oplus x.$
The involutive monoids defined by  these equations are known as {\em MV-algebras},
\cite{cha, cigdotmun, mun11}.
Let\, $\Gamma$ be the categorical equivalence between
 unital  $\ell$-groups and MV-algebras defined in \cite[$\S 3$]{mun-jfa}.
Then 
 $(E(\mathfrak A),
 0, ^*, \oplus)$  is isomorphic to  
 $\Gamma(K_0(\mathfrak A),K_0(\mathfrak A)^+,[1_\mathfrak A] ))$.
 
\medskip
{\rm(v)}{ \rm (\cite{dav, eff, goo-shiva})}\,\,
For any
   AF$\ell$-algebra 
 $\mathfrak A$
the dimension group 
$$K_0(\mathfrak A)=(K_0(\mathfrak A),K_0(\mathfrak A)^+,[1_\mathfrak A] )$$
is a countable lattice-ordered abelian group
with a distinguished strong order unit (for short,
a {\em unital $\ell$-group}).
All countable unital $\ell$-groups arise in this way.
Let    $\mathfrak A_1$   and    
$\mathfrak A_2$    be  AF$\ell$-algebras.
Then $K_0(\mathfrak A_1)$ and $K_0(\mathfrak A_2)$
  are isomorphic  unital  
  $\ell$-groups  iff\,\,       
$\mathfrak A_1$   and    
$\mathfrak A_2$ are isomorphic.
\hfill{$\Box$}
 \end{theorem}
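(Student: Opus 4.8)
The plan is to present Theorem \ref{theorem:multi} as the confluence of three well-established bodies of work, reducing each clause to them: Elliott's classification of AF-algebras by their pointed dimension groups; the Effros--Handelman--Shen intrinsic characterization and realization of countable dimension groups; and the categorical equivalence $\Gamma$ between unital $\ell$-groups and MV-algebras. First I would recall that $K_0(\mathfrak A)=(K_0(\mathfrak A),K_0(\mathfrak A)^+,[1_\mathfrak A])$ is a countable dimension group with distinguished order unit, that it is a complete isomorphism invariant of $\mathfrak A$ (Elliott), and that every countable dimension group with order unit is realized in this way. This already yields (v) except for the word ``lattice-ordered''.

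Next I would install the lattice-order bridge. The Murray--von Neumann order on equivalence classes of projections is exactly the restriction to the interval $[0,[1_\mathfrak A]]$ of the order of $K_0(\mathfrak A)$, so this order is a lattice precisely when $K_0(\mathfrak A)$ is lattice-ordered, i.e.\ is a unital $\ell$-group; thus ``$\mathfrak A$ is an AF$\ell$-algebra'' is equivalent to ``$K_0(\mathfrak A)$ is a unital $\ell$-group''. For (i): Elliott's local semigroup $L(\mathfrak A)$ is isomorphic to the interval $[0,u]$ of $K_0(\mathfrak A)$, where $u=[1_\mathfrak A]$, equipped with the partial operation $a+b$ defined iff $a+b\le u$. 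Any total, associative, commutative, monotone extension $\oplus$ for which $u-p$ is the least $q$ with $p\oplus q=u$ is forced to be the truncated sum $a\oplus b=(a+b)\wedge u$: the partial operation together with the order already recovers the Grothendieck group and hence the meet, while the ``least complement'' clause pins the truncation. This is the content of \cite[Theorem 1]{munpan}, which I would quote rather than reprove. Existence of such an $\oplus$ is then equivalent to the interval being closed under $(a+b)\wedge u$ with the MV-algebra identities holding, which by Chang's and Mundici's representation happens exactly when the order is a lattice, giving the dichotomy in (i).

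For (ii) I would note that the bare semigroup $(S(\mathfrak A),\oplus)$ still determines the partial addition of $L(\mathfrak A)$ order-theoretically (namely $a+b$ is ``defined'' precisely when $a\oplus b$ behaves as an honest sum relative to the induced order), so an isomorphism $S(\mathfrak A_1)\cong S(\mathfrak A_2)$ restricts to an isomorphism $L(\mathfrak A_1)\cong L(\mathfrak A_2)$, whence $\mathfrak A_1\cong\mathfrak A_2$ by Elliott; the converse is functoriality. For (iii) I would set $x^{*}=u-x$ on $[0,u]$, verify the three MV-algebra equations directly (or invoke that $\Gamma$ of a unital $\ell$-group is an MV-algebra, \cite{mun-jfa}), and then observe that $(x^{*}\oplus y)^{*}\oplus y$ is the standard MV-term computing the join in the underlying lattice, with $x\wedge y=(x^{*}\vee y^{*})^{*}$ by the involution; these agree with the Murray--von Neumann lattice operations because the order on $E(\mathfrak A)$ is the interval order of $K_0(\mathfrak A)$. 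For (iv), the assignment $\mathfrak A\mapsto(E(\mathfrak A),0,{}^{*},\oplus)$ is by construction $\Gamma$ applied to $K_0(\mathfrak A)$; it is injective on isomorphism classes by (ii) and (v), and surjective onto countable MV-algebras because $\Gamma$ is an equivalence onto unital $\ell$-groups and, by Effros--Handelman--Shen together with the bridge above, every countable unital $\ell$-group is $K_0$ of an AF$\ell$-algebra.

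I expect the only genuinely non-formal step to be the existence/uniqueness dichotomy in (i): isolating why a total monotone extension of Elliott's partial addition exists exactly in the lattice-ordered case and is then unique. For that I would rely on \cite[Theorem 1]{munpan}; everything else in the statement is a careful transcription of Elliott's classification, the Effros--Handelman--Shen realization theorem, and the $\Gamma$ equivalence of \cite{mun-jfa}.
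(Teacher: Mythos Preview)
Your proposal is a coherent and essentially correct assembly of the cited results, but you should be aware that the paper does not actually give a proof of this theorem: each clause (i)--(v) carries an explicit citation (\cite{munpan}, \cite{ell}, \cite{mun-jfa}, \cite{dav, eff, goo-shiva}), and the statement is closed with a bare $\Box$ and no argument. In other words, the paper treats Theorem~\ref{theorem:multi} as a compendium of background facts and defers entirely to the references.

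What you have written is therefore not a competing proof but an expansion of what the paper leaves implicit. Your reductions are the right ones: Elliott's classification for (ii) and the iff in (v); the Effros--Handelman--Shen realization for surjectivity in (iv)--(v); the $\Gamma$ equivalence of \cite{mun-jfa} for (iii)--(iv); and \cite{munpan} for the dichotomy in (i). Your own assessment that (i) is the only place with genuine content beyond transcription matches the paper's stance, since that is exactly the clause the paper attributes to a specific theorem in \cite{munpan} rather than to a general reference. If anything, you do slightly more than necessary in (ii) by sketching why the semigroup $(S(\mathfrak A),\oplus)$ recovers $L(\mathfrak A)$; the paper simply points to \cite{ell}, and in fact one can go through $E(\mathfrak A)$ and $\Gamma$ instead. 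There is no gap in your outline, but there is also nothing in the paper to compare it against beyond the bibliography.
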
 

\begin{theorem}
\label{theorem:spectral}
 In any AF$\ell$-algebra $\mathfrak A$ we have:
 
\medskip
{\rm(i)} The map
 $
\eta \colon   \mathfrak I   
 \mapsto K_0(\mathfrak I)\cap E(\mathfrak A)   
 $
 is an isomorphism of
  the  lattice of  ideals of $\mathfrak A$ onto the lattice of ideals
 of    $E(\mathfrak A)$. Primitive ideals of $\mathfrak A$ correspond via $\eta$ to 
  prime ideals of $E(\mathfrak A)$.
  
\medskip
{\rm(ii)}  In particular,   $\eta$ is  a
 homeomorphism  of  
 $\prim(\mathfrak A)$  
 onto  
$
 \Spec(E(\mathfrak A))
$.

\medskip
{\rm(iii)} Suppose $\mathfrak j$ is an
 ideal of the countable MV-algebra $A$.  Let
 the AF$\ell$-algebra  $\mathfrak A$ be  defined   
 by $E(\mathfrak A)=A$, as in  Theorem \ref{theorem:multi}(iv). 
 Let  the ideal $\mathfrak J$  of $\mathfrak A$ be defined by
 $\eta(\mathfrak J)= \mathfrak j$.  Then   
 $A/\mathfrak j$ is isomorphic to 
 $E(\mathfrak A/\mathfrak J).$

\medskip
{\rm(iv)}
 For  every ideal $\mathfrak I$ of $\mathfrak A$,  the map
 $
 \left[{p}/{\mathfrak I}\right]\mapsto {[p]}/{\eta(\mathfrak I)}\,,$
 (with $p$  ranging over all projections in $\mathfrak A$), 
is an isomorphism
of $E(\mathfrak A/\mathfrak I)$ onto $E(\mathfrak A)/\eta(\mathfrak I).$ 
In particular, for every
$\mathfrak P\in \prim(\mathfrak A)$ the
 MV-algebra $E(\mathfrak A/\mathfrak P)$
is   totally ordered  and  
 $\mathfrak A/\mathfrak P$ 
  has comparability of projections
 in the sense of Murray-von Neumann.  
 
\medskip
{\rm(v)} The map
 $
 \mathfrak l
 \mapsto \mathfrak l\, \cap \, \Gamma(K_0(\mathfrak A))
 $
 is an isomorphism of the lattice of
  {\em ideals} of $K_0(\mathfrak A)$
  (i.e., kernels of unit preserving
 $\ell$-homomorphisms of 
 $K_0(\mathfrak A)$) 
 onto the lattice of ideals of $E(\mathfrak A).$  Further, 
 $$
 \Gamma\left(\frac{K_0(\mathfrak A)}{\mathfrak l}\right)
 \,\,\cong\,\,
  \frac{\Gamma(K_0(\mathfrak A))}{\,\,\mathfrak l\,\, \cap \,\,\Gamma(K_0(\mathfrak A))\,\,}\,. 
 $$
\end{theorem}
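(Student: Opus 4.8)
The plan is to derive all five items from two known correspondences glued together at the level of $K_0$: the Bratteli--Elliott description of the closed two-sided ideals of an AF-algebra in terms of the order-ideals of its dimension group, and the correspondence furnished by the functor $\Gamma$ between the $\ell$-ideals of a unital $\ell$-group $G$ and the ideals of the MV-algebra $\Gamma(G)$.

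First I would recall that for any AF-algebra $\mathfrak A$ the assignment $\mathfrak I \mapsto K_0(\mathfrak I)$ --- with $K_0(\mathfrak I)$ regarded inside $K_0(\mathfrak A)$ via the map induced by the inclusion $\mathfrak I \hookrightarrow \mathfrak A$, which is injective since $K_0$ is exact on AF-algebras --- is an isomorphism of the lattice of closed two-sided ideals of $\mathfrak A$ onto the lattice of order-ideals (directed convex subgroups) of $K_0(\mathfrak A)$, carrying primitive ideals exactly to the meet-irreducible order-ideals and restricting to a homeomorphism of the corresponding spectra; see \cite{bra, eff, dav}. When $\mathfrak A$ is an AF$\ell$-algebra, $K_0(\mathfrak A)$ is a unital abelian $\ell$-group by Theorem \ref{theorem:multi}(v), and a short lattice computation shows that here the order-ideals \emph{are} precisely the $\ell$-ideals: a directed convex subgroup $H$ contains $|a|=a\vee(-a)$ whenever it contains $a$ (because $a^{+}$ and $a^{-}$ are then squeezed between $0$ and elements of $H$), hence contains $a\vee b=b+(a-b)^{+}$ and, dually, $a\wedge b$ whenever $a,b\in H$, since $0\le(a-b)^{+}\le|a|+|b|\in H$; conversely every $\ell$-ideal is directed. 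Now fix the identification $E(\mathfrak A)=\Gamma(K_0(\mathfrak A))=\{g\in K_0(\mathfrak A): 0\le g\le[1_\mathfrak A]\}$ of Theorem \ref{theorem:multi}(iv), and invoke the standard MV-algebraic fact (\cite{cigdotmun, mun11}) that $\mathfrak l\mapsto\mathfrak l\cap\Gamma(K_0(\mathfrak A))$ is an isomorphism of the lattice of $\ell$-ideals of $K_0(\mathfrak A)$ onto the lattice of ideals of $E(\mathfrak A)$, together with the identity $\Gamma(G/\mathfrak l)\cong\Gamma(G)/(\mathfrak l\cap\Gamma(G))$. The composite $\mathfrak I\mapsto K_0(\mathfrak I)\mapsto K_0(\mathfrak I)\cap E(\mathfrak A)$ is then exactly $\eta$, which gives the first assertion of (i); and (v) is precisely the $\Gamma$-side of this chain, read off for $G=K_0(\mathfrak A)$.

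For the clause ``primitive $\leftrightarrow$ prime'' of (i), I would combine the quoted equivalence \cite[3.8]{bra} --- $\mathfrak P\in\prim(\mathfrak A)$ iff $\mathfrak P$ is meet-irreducible in the ideal lattice of $\mathfrak A$ --- with the classical fact that in an MV-algebra the meet-irreducible ideals are exactly the prime ideals (those with totally ordered quotient); since $\eta$ is a lattice isomorphism it carries the one class bijectively onto the other. Item (ii) then follows formally: the closed sets of $\prim(\mathfrak A)$ are the hulls $\{\mathfrak P:\mathfrak P\supseteq\mathfrak I\}$ and those of $\Spec(E(\mathfrak A))$ are the hulls $\{\mathfrak p:\mathfrak p\supseteq\mathfrak j\}$, so an order isomorphism of ideal lattices that matches primitive with prime ideals automatically carries closed sets to closed sets in both directions, hence is a homeomorphism.

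Finally, for (iii) and (iv) I would apply exactness of $K_0$ to the extension $0\to\mathfrak I\to\mathfrak A\to\mathfrak A/\mathfrak I\to 0$, identifying $K_0(\mathfrak A/\mathfrak I)$ as a unital $\ell$-group with $K_0(\mathfrak A)/K_0(\mathfrak I)$ (strong order unit the image of $[1_\mathfrak A]$); applying $\Gamma$ and the quotient identity above yields $E(\mathfrak A/\mathfrak I)\cong\Gamma(K_0(\mathfrak A/\mathfrak I))\cong\Gamma(K_0(\mathfrak A))/\eta(\mathfrak I)=E(\mathfrak A)/\eta(\mathfrak I)$, and unwinding the isomorphisms --- using that projections in $\mathfrak A/\mathfrak I$ lift to projections in $\mathfrak A$, AF-algebras being of real rank zero --- shows the isomorphism is $[p/\mathfrak I]\mapsto[p]/\eta(\mathfrak I)$. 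Item (iii) is then the same statement read backwards: given an ideal $\mathfrak j$ of a countable MV-algebra $A$, take $\mathfrak A$ with $E(\mathfrak A)=A$ and, via the lattice isomorphism of (i), the ideal $\mathfrak J$ with $\eta(\mathfrak J)=\mathfrak j$; then (iv) gives $E(\mathfrak A/\mathfrak J)\cong E(\mathfrak A)/\eta(\mathfrak J)=A/\mathfrak j$. The last assertion of (iv) follows because for $\mathfrak P\in\prim(\mathfrak A)$ the ideal $\eta(\mathfrak P)$ is prime, so $E(\mathfrak A/\mathfrak P)\cong E(\mathfrak A)/\eta(\mathfrak P)$ is an MV-chain, and a totally ordered Elliott monoid is exactly Murray-von Neumann comparability of projections in $\mathfrak A/\mathfrak P$. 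I expect the one genuinely delicate point to be the bookkeeping that matches the Bratteli correspondence (stated for order-ideals of a general dimension group) with the $\ell$-ideal structure and with the $\Gamma$-functor's own ideal correspondence; once that is in place, the rest is assembling the cited results.
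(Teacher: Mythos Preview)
Your proposal is correct and follows essentially the same route as the paper: factor $\eta$ through the Bratteli--Elliott/Davidson--Goodearl lattice isomorphism between ideals of $\mathfrak A$ and ($\ell$-)ideals of $K_0(\mathfrak A)$, then apply the preservation properties of $\Gamma$ (\cite[7.2.2, 7.2.4]{cigdotmun}) for (i), (v), and the primitive/prime correspondence; use exactness of $K_0$ on the short exact sequence together with $\Gamma$'s compatibility with quotients for (iii)--(iv); and read off (ii) from the definitions of the two topologies. Your argument is somewhat more explicit than the paper's (you spell out why order-ideals coincide with $\ell$-ideals in a unital $\ell$-group, and you invoke projection lifting to pin down the map in (iv)), but the architecture is the same.
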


\smallskip
\begin{proof} (i)\,\, From   \cite[Proposition IV.5.1]{dav} and
 \cite[p.196, 21H]{goo-shiva} one gets an isomorphism
of  the lattice of ideals of $\mathfrak A$ onto the lattice of
 ideals of the unital $\ell$-group $K_0(\mathfrak A).$ 
The
preservation properties of $\Gamma$, \cite[Theorems 7.2.2, 
 7.2.4]{cigdotmun},  then yield the desired isomorphism.
 The second statement immediately follows from
  \cite[Theorem 3.8]{bra} and the above mentioned
   characterization of
  prime ideals in MV-algebras,
  (\cite[Proposition 4.13]{mun11}).

\smallskip
(ii)  follows from (i), by definition of the topologies
of $\prim(\mathfrak A)$  and   $\Spec(E(\mathfrak A))$.

\medskip
(iii)  \,  We have an exact sequence
$
0\to \mathfrak J \to \mathfrak A\to \mathfrak A/\mathfrak J\to 0.
$
Correspondingly (\cite[IV.15]{dav} or 
 \cite[Corollary 9.2]{eff}),  we have an
exact sequence
$
0\to K_0(\mathfrak J) \to K_0(\mathfrak A)\to 
K_0(\mathfrak A/\mathfrak J) \to 0,
$
whence the unital $\ell$-groups 
$
K_0\left({\mathfrak A}/{\mathfrak J}\right)
$ 
and
$
{K_0(\mathfrak A)}/{K_0(\mathfrak J)}
$
are isomorphic. 
The preservation properties of $\Gamma$ under quotients
 \cite[Theorem 7.2.4]{cigdotmun},
  together with Theorem \ref{theorem:multi}(iv)-(v) yield
  $$
E\left(\frac{\mathfrak A}{\mathfrak J}\right)
\cong  \Gamma\left(K_0\left(\frac{\mathfrak A}{\mathfrak J}\right)\right)\cong
   \frac{\Gamma(K_0(\mathfrak A))}{K_0(\mathfrak J)\cap
   \Gamma(K_0(\mathfrak A))}\cong\frac{E(\mathfrak A)}
   {\eta(\mathfrak J)}=
   \frac{A}{\mathfrak j}.
  $$
  
\smallskip
(iv)  Combine  (i) and  (iii)  with the preservation properties of
$K_0$ for exact sequences  and the preservation
properties of $\Gamma$ under quotients. 
The   MV-algebra
$E(\mathfrak A/\mathfrak P)$
is   totally ordered. As a matter of fact,  by (ii),
$\eta(\mathfrak P)$ belongs to $ \Spec(E(\mathfrak A))$ 
whenever  $\mathfrak P$ beongs
to $ \prim(\mathfrak A)$.
By Theorem \ref{theorem:multi}(iv), 
 $\mathfrak A/\mathfrak P$ 
  has comparability of projections.  
 
\smallskip
(v) This again follows from 
  \cite[Theorems 7.2.2, 7.2.4]{cigdotmun}.  
 \end{proof}

\section{Characterizing central projections in AF$\ell$-algebras}
\label{section:central}
Following
\cite[p.22]{ghl},
by a {\it characteristic element} in a partially ordered
abelian group $G$ with order unit $u$, we mean an
element $c$ with $0\leq c\leq u$ such that the greatest lower bound
of the set $\{c,  u-c\}$ exists and 
 equals 0. 
In the framework  of real vector lattices,
 with  $u$ an arbitrary positive element, $c$ is known as a
  ``component of $u$'',
\cite[p.13]{alibur}, \cite[p.284]{luxzaa}. This 
 notion goes back to Freudenthal, \cite[p. 643, (5.2)]{fre}.

\smallskip 
In every MV-algebra $A$ one defines
$a\odot b=(a^*
\oplus b^*)^*\,\,\, \mbox{ and }\,\,\,
\dist(a,b)=(a\odot b^*)\oplus(b\odot  a^*).$
Chang \cite{cha}
 writes $x\cdot y$ instead 
 of $x\odot y$, and  $d(x,y)$ instead of $\dist(x,y)$. 
 The latter is known as {\it Chang's distance function.}
See \cite[p.8 and 1.2.4]{cigdotmun} and 
  \cite[p.477]{cha}.
Further, let us use the notation 
$
\Boole(A) 
$
for the  set of {\em idempotent}
elements of the MV-algebra  $A$,
\begin{equation}
\label{equation:boole}
\Boole(A) =\{a\in A\mid a\oplus a =a\}.
\end{equation}
As observed by Chang  \cite[Theorems 1.16-1.17]{cha},
${\Boole}(A)$  
  is a subalgebra of $A$ which turns out to be
a boolean algebra.
By an {\it MV-chain} we mean an MV-algebra  $D$
whose underlying
order is total, \cite[p.10]{cigdotmun}.

\medskip 
\subsection*{The binary relation 
``$p$ is closer than $q$ to the center''}
Let $\mathfrak P$ be a primitive ideal of an AF$\ell$-algebra
$\mathfrak A.$ By Theorem \ref{theorem:spectral} the
quotient  AF$\ell$-algebra  $\mathfrak A/\mathfrak P$ has
comparability of projections. Thus for any 
projections 
$p,q\in \mathfrak A$ we have the following three
mutually incompatible cases:  
$$p/\mathfrak P\prec q/\mathfrak P, \,\,
p/\mathfrak P\succ q/\mathfrak P,\,\,
p/\mathfrak P \sim  q/\mathfrak P.$$
In particular, we have incompatible cases:  
$$p/\mathfrak P\prec p^\ast/\mathfrak P, \,\,
p/\mathfrak P\succ p^\ast/\mathfrak P,\,\,
p/\mathfrak P \sim  p^\ast/\mathfrak P.$$

\smallskip
\noindent
The proof of the following result   will appear
elsewhere:

\begin{theorem}
[Characterization of central projections]
\label{theorem:threeconditions}
Suppose $\mathfrak A$ is
an   AF$\ell$-algebra. 
In view of Theorem
\ref{theorem:multi}(iv)-(v),  let us identify
$E(\mathfrak A)$ with the unit interval
$\Gamma(K_0(\mathfrak A))$ of the
unital $\ell$-group
$K_0(\mathfrak A).$

\smallskip
\noindent
(I)
For  every projection $p$ of $\mathfrak A$ 
the  following conditions are equivalent:
\begin{itemize}
\item[(i)] $p/\mathfrak P \in\{0,1\}$,   (the trivial 
elements of  $\mathfrak A/\mathfrak P$) \,
 for all \,\,$\mathfrak P\in \prim(\mathfrak A)$.

\smallskip 
\item[(ii)]  $[p]\in {\Boole}(E(\mathfrak A))$.

\smallskip  
\item[(iii)]  $p$ is central in $\mathfrak A$.

\smallskip  
\item[(iv)] $[p]$ is a  characteristic element  of
$K_0(\mathfrak A).$ 
\end{itemize}

\smallskip
\noindent
(II)
For any  $x,y\in E(\mathfrak A)$ let us write
$
 x\sqsubseteq y
$
 \,\, iff \,\, 
  for every  prime ideal $\mathfrak p$ of $E(\mathfrak A)$
$$
(y/\mathfrak p<y^*/\mathfrak p\,\,\,\mbox{\it implies}
  \,\,\,  x/\mathfrak p\leq y/\mathfrak p)
    \,\,\,\,\mbox{  and  }\,\,\,
 (y /\mathfrak p>y^*/\mathfrak p \,\,\,
\mbox{\it implies}  \,\,\,   x /\mathfrak p\geq y/\mathfrak p),
$$
with $\leq$ the underlying  total order of 
$E(\mathfrak A)/\mathfrak p$  defined 
in Theorem \ref{theorem:multi}(iii).
Then 
$\sqsubseteq$ endows 
$E(\mathfrak A)$
with  a partial order  relation.  
Moreover, for 
 every projection  $p$ in $\mathfrak A$,
the equivalent 
the   conditions  
   in (I) 
are also equivalent to
$[p]$ being  $\sqsubseteq$-minimal in $E(\mathfrak A).$
\end{theorem}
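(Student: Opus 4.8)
\emph{Part~(I).} The plan is to establish the cycle of equivalences (iii)$\Leftrightarrow$(i)$\Leftrightarrow$(ii)$\Leftrightarrow$(iv). The equivalence (ii)$\Leftrightarrow$(iv) is a direct unwinding of definitions: under the identification $E(\mathfrak A)=\Gamma(K_0(\mathfrak A))$ and computing $\oplus$ in $K_0(\mathfrak A)$, the condition $[p]\oplus[p]=[p]$ reads $(2[p])\wedge 1=[p]$, and translation by $-[p]$ turns it into $[p]\wedge(1-[p])=[p]\wedge[p]^*=0$; since $0\le[p]\le 1$ holds automatically, this is precisely the assertion that $[p]$ is a characteristic element of $K_0(\mathfrak A)$. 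For (i)$\Leftrightarrow$(ii): by~(\ref{equation:subdirect}), $[p]\wedge[p]^*=0$ holds in $E(\mathfrak A)$ iff its image vanishes in every chain quotient $E(\mathfrak A)/\mathfrak p$, $\mathfrak p\in\Spec(E(\mathfrak A))$ (quotient maps preserve $\wedge$ and $^*$); in a nontrivial MV-chain $a\wedge a^*=0$ iff $a\in\{0,1\}$, i.e.\ the idempotents of such a chain are exactly $0$ and $1$; and by Theorem~\ref{theorem:spectral}(ii),(iv) these quotients are, up to isomorphism, exactly the MV-chains $E(\mathfrak A/\mathfrak P)$, $\mathfrak P\in\prim(\mathfrak A)$, with $[p]/\mathfrak p$ corresponding to $[p/\mathfrak P]$; finally $[p/\mathfrak P]\in\{0,1\}$ iff $p/\mathfrak P$ is a trivial projection of $\mathfrak A/\mathfrak P$, the one nontrivial implication using that the AF-algebra $\mathfrak A/\mathfrak P$ is stably finite, so $[q]=[1]$ forces $q=1$. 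For (i)$\Leftrightarrow$(iii): since $\bigcap\prim(\mathfrak A)=\{0\}$, a projection $p$ is central in $\mathfrak A$ iff $[p,a]/\mathfrak P=0$ for all $a\in\mathfrak A$ and all $\mathfrak P\in\prim(\mathfrak A)$, i.e.\ iff $p/\mathfrak P$ is central in $\mathfrak A/\mathfrak P$ for every $\mathfrak P$; and $\mathfrak A/\mathfrak P$, being a prime C*-algebra (as $\mathfrak P$ is primitive, hence prime by \cite[3.8]{bra}), has no central projection other than $0$ and $1$, for a nontrivial one would exhibit $\mathfrak A/\mathfrak P$ as a nonzero C*-algebraic direct sum, contradicting primeness.

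\emph{Part~(II), that $\sqsubseteq$ is a partial order.} Reflexivity is immediate. For transitivity and antisymmetry I would argue prime by prime, using that each quotient map $E(\mathfrak A)\to E(\mathfrak A)/\mathfrak p$ is a surjective MV-homomorphism onto an MV-chain, that it preserves $\le$, and that $\bigcap\Spec(E(\mathfrak A))=\{0\}$ separates points via $\dist$. Writing $\bar z=z/\mathfrak p$: if $x\sqsubseteq y\sqsubseteq z$ and $\bar z<\bar z^*$, then $y\sqsubseteq z$ gives $\bar y\le\bar z$, hence $\bar y\le\bar z<\bar z^*\le\bar y^*$, so $\bar y<\bar y^*$, and $x\sqsubseteq y$ then gives $\bar x\le\bar y\le\bar z$; the case $\bar z>\bar z^*$ is symmetric, so $x\sqsubseteq z$. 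For antisymmetry, assume $x\sqsubseteq y$ and $y\sqsubseteq x$: if $\bar y<\bar y^*$ the above gives $\bar x\le\bar y$, then $\bar x<\bar x^*$, then (from $y\sqsubseteq x$) $\bar y\le\bar x$, so $\bar x=\bar y$; the case $\bar y>\bar y^*$ is symmetric; and if $\bar y=\bar y^*$, i.e.\ $2\bar y=1$ in the underlying totally ordered group, any strict inequality between $\bar x$ and $\bar x^*$ would, after doubling, contradict $y\sqsubseteq x$, so $\bar x=\bar x^*$ and $2\bar x=1=2\bar y$, giving $\bar x=\bar y$. Hence $x=y$.

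\emph{Part~(II), $\sqsubseteq$-minimality versus the conditions of~(I).} In view of~(I) it suffices to show that the $\sqsubseteq$-minimal elements of $E(\mathfrak A)$ are exactly the members of $\Boole(E(\mathfrak A))$. If $[p]\in\Boole(E(\mathfrak A))$ then $[p]\wedge[p]^*=0$, hence $[p]\vee[p]^*=1$, so $[p]/\mathfrak p\in\{0,1\}$ for every prime $\mathfrak p$; then for $x\sqsubseteq[p]$ the relevant defining implication forces $x/\mathfrak p=0$ when $[p]/\mathfrak p=0$ and $x/\mathfrak p=1$ when $[p]/\mathfrak p=1$, so $x/\mathfrak p=[p]/\mathfrak p$ at every prime and thus $x=[p]$: $[p]$ is $\sqsubseteq$-minimal. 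Conversely, if $[p]\notin\Boole(E(\mathfrak A))$ then by~(\ref{equation:subdirect}) there is a prime $\mathfrak q$ at which $[p]/\mathfrak q$ is a non-idempotent element of the chain $E(\mathfrak A)/\mathfrak q$, so $0<[p]/\mathfrak q<1$, and I would produce $x\sqsubsetneq[p]$ by a two-case argument. If some prime $\mathfrak p$ satisfies $0<[p]/\mathfrak p<[p]^*/\mathfrak p$, set $x:=([p]\oplus[p])\odot[p]$; writing $a=[p]/\mathfrak p$, a short computation in the underlying $\ell$-group gives $x/\mathfrak p=((3a-1)\vee 0)\wedge a$ at every prime, which is always $\le a$ and equals $a$ whenever $a\ge a^*$, so that $x\sqsubseteq[p]$, while at the chosen prime $0<a<a^*$ yields $x/\mathfrak p<a$, so $x\ne[p]$. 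Otherwise $[p]/\mathfrak p=0$ or $[p]/\mathfrak p\ge[p]^*/\mathfrak p$ at every prime; set $x:=[p]\oplus[p]$, which equals $0=[p]/\mathfrak p$ at the primes with $[p]/\mathfrak p<[p]^*/\mathfrak p$ and equals $1\ge[p]/\mathfrak p$ elsewhere, so $x\sqsubseteq[p]$, while $x/\mathfrak q=1\ne[p]/\mathfrak q$. In either case $[p]$ is not $\sqsubseteq$-minimal, and combining with~(I) finishes the proof.

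\emph{The main obstacle.} The one genuinely delicate step is the second half of the preceding paragraph. It is tempting to use as witness the ``Boolean rounding'' of $[p]$, namely the element equal to $1$ on $\{\mathfrak p\in\Spec(E(\mathfrak A)):[p]/\mathfrak p>[p]^*/\mathfrak p\}$ and to $0$ elsewhere; but this subset of $\Spec(E(\mathfrak A))$ need not be clopen (e.g.\ the maximal spectrum of a free MV-algebra is connected), so in general no such Boolean element of $E(\mathfrak A)$ exists. The case split therefore exhibits instead the explicit, typically non-Boolean, term $([p]\oplus[p])\odot[p]$ and checks its behaviour in each totally ordered quotient directly.
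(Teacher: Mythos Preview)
The paper does not actually contain a proof of this theorem: immediately before the statement it reads ``The proof of the following result will appear elsewhere.'' There is therefore no in-paper argument to compare your attempt against.

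For what it is worth, your proposal is essentially correct and self-contained. In Part~(I) the cycle (iii)$\Leftrightarrow$(i)$\Leftrightarrow$(ii)$\Leftrightarrow$(iv) uses only standard ingredients already present in the paper (Theorem~\ref{theorem:spectral}, equation~\eqref{equation:subdirect}, stable finiteness of AF-algebras, and the computation $[p]\oplus[p]=[p]\Leftrightarrow[p]\wedge[p]^*=0$ in $\Gamma(K_0(\mathfrak A))$). In Part~(II) the prime-by-prime verification of the partial-order axioms is sound; the step ``$2\bar x=1=2\bar y\Rightarrow\bar x=\bar y$'' tacitly uses torsion-freeness of totally ordered abelian groups, which is fine. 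The non-minimality direction is the only place requiring an idea, and your two explicit witnesses $([p]\oplus[p])\odot[p]$ and $[p]\oplus[p]$ do work: in the first case $((3a-1)\vee 0)\wedge a\le a$ at every prime, with equality exactly when $a\ge a^*$, so $x\sqsubseteq[p]$ and $x\neq[p]$ at the distinguished prime; in the second case the hypothesis forces $[p]/\mathfrak p\in\{0\}\cup[\,[p]^*/\mathfrak p,1]$ at every prime, making $[p]\oplus[p]$ agree with $[p]$ where $[p]/\mathfrak p=0$ and equal $1$ elsewhere. Your closing remark on why a ``Boolean rounding'' cannot serve as witness is apt.
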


\medskip
 \section{The algorithmic theory of AF$\ell$-algebras} 
 \label{section:algorithmic}

   Our standard
 reference for computability theory is \cite{macyou}.
 For  \L$_\infty$  
  we refer to \cite{cigdotmun,mun11}.

 \subsection*{The syntax of 
 $\mathsf{TERM}_n$ and $\mathsf{TERM}_\omega$}
The set   $\{0,^*,\oplus\}$  of constant and
 operation {\it symbols}  of  involutive monoids is enriched
by adding   countably many variable symbols  $X_1,X_2,\dots$.
Henceforth, the set
$\mathcal A= \{0,^*,\oplus, 
X_1,X_2,\dots, ),(, \}$
is our {\it alphabet}.
Parentheses are added to construct a non-ambiguous
readable syntax.
The set  $\mathcal A^*$ of {\it strings} over $\mathcal A$
is defined by 
$
\mathcal A^*=  \{(s_1,s_2,...,s_l)\in \mathcal  A^l\mid
l=0, 1,2,\dots \}.
$
Let $n=1,2,\dots$.
By a {\it term  
 in the variables
$X_1,\dots,X_n$}\/,
we mean a  string $\phi\in \mathcal A^*$ obtainable
by the following inductive
definition:
\,\,\,(*)  0 and $X_1,X_2,\dots,X_n$ 
are terms;\,\,\, (**)  if  $\alpha$ and $\beta$ are terms,
then so are $\alpha^*$ and $(\alpha\oplus\beta).$
We let 
 $\mathsf{TERM}_n$  be   the set of terms
in the variables $X_1,X_2,\dots,X_n$.
Elements of  $\mathsf{TERM}_n$ 
are also known as $n$-variable {\it \luk\ formulas}.
We also let 
 $\mathsf{TERM}_\omega=\bigcup_n\mathsf{TERM}_n.$

 \subsection*{Coding  equivalence classes
 of projections by \L$_\infty$-formulas}
Fix   $n=1,2,\dots $.
By McNaughton's theorem \cite{mcn},
the coordinate functions  $\{\pi_1,\dots,\pi_n\}$ are
a distinguished free generating set of the free
MV-algebra  $\McNn\cong E(\mathfrak M_n).$
In view of Theorem
\ref{theorem:multi}(iv),  the   AF$\ell$-algebra
$\mathfrak M_n$ is defined by 
$
E(\mathfrak M_n)=\McNn.
$

Arbitrarily pick representative projections
$\mathsf p_1,\dots, \mathsf p_n\in \mathfrak M_n$
such that the equivalence classes 
$[\mathsf p_1],\dots, [\mathsf p_n]\in E(\mathfrak M_n)$
correspond to  $\pi_1,\dots,\pi_n$ via Elliott's classification.
We
may naturally say that the variable symbol $X_i$
{\it codes} both the coordinate function
  $\pi_i\in \McNn$ and the equivalence class $[\mathsf p_i]
\in E(\mathfrak M_n). $  For definiteness, we will 
say that  $\pi_i$ is   the {\it interpretation of  $X_i$ in
(the Elliott monoid of)
$\mathfrak M_n$}, in symbols,
$
X_i^{\mathfrak M_n} =  \pi_i.
$
For every  $\phi\in \mathsf{TERM}_n,$
 the {\it interpretation
  $\phi^{\mathfrak M_n}$ of $\phi$ in 
  (the Elliott monoid of) $\mathfrak M_n$}
is then defined by 
(*)  $ 0^{\mathfrak M_n} = \mbox{the constant zero function over }\interval^n$
  and  inductively,
(**)  
$(\psi^*){^{\mathfrak M_n}}= (\psi^{\mathfrak M_n})^*,\,\,\,
 {(\alpha \oplus \beta)}^{\mathfrak M_n}
  =  (\alpha^{\mathfrak M_n}\oplus\beta^{\mathfrak M_n}).$
We  also   say that $\phi$ {\it codes}  
$\phi^{\mathfrak M_n}$
in (the Elliott monoid of) $\mathfrak M_n$.
By a traditional abuse of notation,   the MV-algebraic  
 operation {\it symbols} also denote  
 their corresponding {\it operations}. 
More generally,  let $\mathfrak A=\mathfrak M_n/\mathfrak I$
 be an
AF$\ell$-algebra,  
 for some
ideal $\mathfrak I$ of $\mathfrak M_n.$
Let 
  \begin{equation}
  \label{equation:ore1} 
  \mathfrak i=K_0(\mathfrak I)\cap E(\mathfrak M_n).
  \end{equation}
  be the ideal of $\McNn$ corresponding to $\mathfrak I$
by  Theorem \ref{theorem:spectral}(v).
Via the  identification 
 $E(\mathfrak M_n/\mathfrak I) =
 \McNn/\mathfrak i,$
  the 
{\it interpretation
  $\phi^{\mathfrak A}$ of $\phi$ in (the Elliott monoid of)
   $\mathfrak A$} 
  is defined by:  $0^\mathfrak A=\{0\}\subseteq \mathfrak A$,
  %
$X_i^{\mathfrak A}\,\,\,=\,\,\,X_i^{\mathfrak M_n/\mathfrak I}
\,\,\,=\,\,\,
X_i^{\mathfrak M_n}/\mathfrak i \,\,\,=\,\,\, \pi_i/\mathfrak i$
and inductively, 
$  (\psi^{*})^{\mathfrak A}= (\psi^{\mathfrak A})^*,\,\,\,\,\,\,\,
 {(\alpha \oplus \beta)}^{\mathfrak A}
  =  (\alpha^{\mathfrak A}\oplus\beta^{\mathfrak A}).$
The following identities are easily verified by induction on the
syntactical complexity of terms  (i.e., the number of symbols
occurring in each term), using the unique readability property
of the syntax:
 $(\psi^{*})^{\mathfrak A} = \frac{(\psi^{*})^{\mathfrak M_n}}
 {\mathfrak i}
 =\left(\frac{\psi^{\mathfrak M_n}}{\mathfrak i}\right)^*
 = \frac{(\psi^{\mathfrak M_n})^*}{\mathfrak i}$
 and
 $ {(\alpha\oplus \beta)}^{\mathfrak A}=
 \frac{(\alpha\oplus \beta)^{\mathfrak M_n}}{\mathfrak i}=
\frac{\alpha^{\mathfrak M_n}}{\mathfrak i}\oplus
\frac{\beta^{\mathfrak M_n}}{\mathfrak i}=
\frac{\alpha^{\mathfrak M_n}\oplus
 \beta^{\mathfrak M_n}}{\mathfrak i}.$
We also  say that  $\phi$   {\it codes  $\phi^\mathfrak A$ in 
(the Elliott monoid of) $\mathfrak A$}.

 \medskip
 The interpretation  $\phi^\mathfrak M$ of
 $\phi\in \mathsf{TERM}_\omega$ 
  in   (the Elliott monoid of)  the  universal AF$\ell$-algebra
 $\mathfrak M$ and its quotients is similarly defined.

\subsection*{Decision problems for  AF$\ell$-algebras}
Fix a cardinal $\kappa=1,2,\dots,\omega$. Let   
 $\mathfrak A=\mathfrak M_\kappa/\mathfrak I$
 for some ideal $\mathfrak I$ of $\mathfrak M_\kappa.$
 (Here  $\mathfrak M_\omega=\mathfrak M$.)
The  {\it word  problem  $\mathsf P_{1}$   
of  $\mathfrak A$} is defined  by 
$
\mathsf P_{1}=
 \{(\phi,\psi)\in \mathcal A^{*} \mid
 (\phi,\psi)\in \mathsf{TERM}_\kappa^2
 \mbox{ and }
 \phi^\mathfrak A= \psi^\mathfrak A \}.
$
Intuitively,   
on input strings
 $\phi$ and $\psi$,  
	\,\,  $\mathsf P_{1}$
checks  if   $\phi$ and $\psi$ are  strings in $\mathsf{TERM}_\kappa$\,\,\,
coding  the same equivalence
class of projections of $\mathfrak A$.
  Likewise, the  {\it order  problem}  $\mathsf P_{2}$
   of $\mathfrak A$
   is the subset 
 of $\mathcal A^{*}$
given by 
$ \{(\phi,\psi)\in \mathsf{TERM}_\kappa^2\mid
 \phi^\mathfrak A \preceq \psi^\mathfrak A \}.$ 
Problem $\mathsf P_{2}$
 checks if   $\phi$   codes  an equivalence
class of projections  $ \phi^\mathfrak A$ in $\mathfrak A$  
that  precedes $ \psi^\mathfrak A$
in the Murray-von Neumann order $\preceq$ of 
projections in $\mathfrak A$.
 The  {\it  eccentricity  problem}  $\mathsf P_{3}
 = \{(\phi,\psi)\in \mathsf{TERM}_\kappa^2\mid
 \phi^\mathfrak A \sqsubseteq \psi^\mathfrak A \}$
   of $\mathfrak A$  checks whether $\phi$ codes  an equivalence
class of projections  $ \phi^\mathfrak A$ in $\mathfrak A$  
that  precedes $ \psi^\mathfrak A$
in the  $\sqsubseteq$-order of $\mathfrak A$.
  Further,  the   {\it zero  problem}  $\mathsf P_{4}
 = \{\phi \in \mathsf{TERM}_\kappa\mid
 \phi^\mathfrak A =0  \}$
   of $\mathfrak A$  checks if  $ \phi^\mathfrak A =0.$
 The  {\it central  projection problem}  $\mathsf P_{5}$
   of $\mathfrak A$  
   checks if  $\phi^\mathfrak A $ is an equivalence class of  
 central projections in $\mathfrak A$. 
   The  {\it nontrivial projection  problem}   $\mathsf{P}_6$   
of   $\mathfrak A$  
checks if  $ \phi^\mathfrak A$
  different from 0 and 1.
The  {\it  central nontrivial projection  problem}  $\mathsf{P}_7$ 
of  $\mathfrak A$  
checks if $ \phi^\mathfrak A$ is an equivalence class
of   central
 projections of $\mathfrak A$ other than 0 or 1.

  \subsection*{Polytime  problems and  (Turing) reductions}
 For any formula  $\phi\in \mathsf{TERM}_\kappa$ we let
$
 \mbox{$ ||\phi||$ =  the number of occurrences of symbols in  
$ \phi.$}
$
 Let $i=1,\dots,7$. We say that $\mathsf{P_i}$
 is {\it decidable in polytime}  if 
  there is
  a polynomial $q\colon\{0,1,2,\dots\}\to \{0,1,2,\dots\}$ 
  and a Turing machine  $\mathcal T$ with the
  following property:
Over any input
string $\sigma\in \mathcal A^*$, machine $\mathcal T$
decides if $\sigma$ belongs to 
$\mathsf{P_i}$ within a number of steps  $\leq q(||\sigma||)$.
 Given problems
 $\mathsf{P'}, \mathsf{P''}\subseteq \mathcal A^*$, a
   {\it reduction} $\rho$ of $\mathsf{P'}$ to $\mathsf{P''}$
 is a map 
 $
 \rho\colon \mathcal A^*\to \mathcal A^*
 $
 such that for every string $\sigma\in \mathcal A^*$,
 \,\,\,  $\sigma\in  \mathsf{P'}$\,\,\, iff\,\,\,  
 $\rho(\sigma) \in  \mathsf{P''}.$
 When $\rho$ is computable in polynomial time
 we say it is a {\it polytime reduction.}
Compare with 
  \cite[p.211]{macyou}.

\medskip

\begin{proposition}
\label{proposition:zhomeo}
Let  $\mathfrak A$ be an  AF$\ell$-algebra whose Elliott
monoid  is finitely generated.  
Let  $\mathfrak I$ (resp., $\mathfrak J$) be an 
ideal of   $\mathfrak M_m$ 
(resp.,   of $\mathfrak M_n$) such that
$\mathfrak A\cong \mathfrak M_m/\mathfrak I
\cong \mathfrak M_n/\mathfrak J.$
Let $\mathsf{P}\subseteq \mathcal A^*$ be any problem
among $\mathsf{P}_1,\dots,\mathsf{P}_7$.
Then  
  $\mathsf{P}$
 for   $\mathfrak M_m/\mathfrak I$ is polytime reducible 
  to $\mathsf{P}$ for   $\mathfrak M_n/\mathfrak J.$
Thus in particular,  problem $\mathsf{P}$
for  $\mathfrak M_m/\mathfrak I$ 
    is decidable in polytime
iff so is $\mathsf{P}$
for  $\mathfrak M_n/\mathfrak J$.
\end{proposition}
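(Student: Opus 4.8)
The plan is to fix, once and for all, a ``change of generators'' between the two presentations of $\mathfrak A$, and then show that substituting it into an input formula is a single polytime reduction that simultaneously handles all seven problems.

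First I would set up the data. Write $\mathfrak i=K_0(\mathfrak I)\cap E(\mathfrak M_m)$ and $\mathfrak j=K_0(\mathfrak J)\cap E(\mathfrak M_n)$, so that Theorems \ref{theorem:spectral}(v) and \ref{theorem:multi}(iv) identify $E(\mathfrak M_m/\mathfrak I)$ with $\McNm/\mathfrak i$ and $E(\mathfrak M_n/\mathfrak J)$ with $\McNn/\mathfrak j$; the hypothesis $\mathfrak M_m/\mathfrak I\cong\mathfrak M_n/\mathfrak J$ then yields an MV-algebra isomorphism $\nu\colon \McNm/\mathfrak i\to\McNn/\mathfrak j$ induced by the underlying C*-isomorphism. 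By McNaughton's theorem $\pi_1,\dots,\pi_n$ generate $\McNn$, hence their cosets $X_1^{\mathfrak M_n/\mathfrak J},\dots,X_n^{\mathfrak M_n/\mathfrak J}=\pi_1/\mathfrak j,\dots,\pi_n/\mathfrak j$ generate $\McNn/\mathfrak j$ as an MV-algebra; so for each $i=1,\dots,m$ I may pick a formula $t_i\in\mathsf{TERM}_n$ with $t_i^{\mathfrak M_n/\mathfrak J}=\nu\bigl(X_i^{\mathfrak M_m/\mathfrak I}\bigr)$. Set $c=\max\{1,||t_1||,\dots,||t_m||\}$; note that $c$ depends only on $\mathfrak A$ and the two chosen presentations, \emph{not} on any input string — this is precisely where finite generation of the Elliott monoid enters (with infinitely many generators the translating terms could have unbounded length, so the blow-up would not be controlled by a fixed constant).

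Next I would define the reduction $\rho$. On an input $\sigma\in\mathcal A^*$ it first checks, in polynomial time using a parser for the unambiguous syntax of $\mathsf{TERM}$, whether $\sigma$ is a well-formed member of $\mathsf{TERM}_m$ (for $\mathsf P_4,\dots,\mathsf P_7$) or a well-formed pair in $\mathsf{TERM}_m^2$ (for $\mathsf P_1,\mathsf P_2,\mathsf P_3$). If it is not, $\rho$ outputs the one-symbol string consisting of a single left parenthesis, which is not a term and therefore lies in none of $\mathsf P_1,\dots,\mathsf P_7$ for $\mathfrak M_n/\mathfrak J$, so the reduction is vacuously correct on such $\sigma$. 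If $\sigma$ is well-formed, $\rho$ replaces every occurrence of each variable symbol $X_i$ by the string $t_i$ (leaving $0,{}^{*},\oplus,(,)$ untouched), componentwise on pairs; denote the result of processing a formula $\phi$ by $\phi'$. A routine induction on the syntactic complexity of $\phi$ — the substitution lemma for the term algebra, using that the interpretation assignments and $\nu$ commute with ${}^{*}$ and $\oplus$ — gives $\phi'^{\,\mathfrak M_n/\mathfrak J}=\nu\bigl(\phi^{\mathfrak M_m/\mathfrak I}\bigr)$. Since each symbol of $\phi$ is replaced by a string of length $\le c$, we get $||\phi'||\le c\,||\phi||$, so $\rho$ is polytime with at most linear output blow-up.

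Finally I would verify correctness for each $\mathsf P_i$, which amounts to the invariance of the defining predicate under $\nu$: equality and the Murray–von Neumann order $\preceq$ (the underlying MV-algebra order, Theorem \ref{theorem:multi}(iii)) are preserved because $\nu$ is an order isomorphism; $\phi^{\mathfrak M_m/\mathfrak I}=0$ iff its image under $\nu$ is $0$, and $\phi^{\mathfrak M_m/\mathfrak I}\notin\{0,1\}$ iff its image is $\notin\{0,1\}$, since $\nu$ is a bijection fixing $0$ and $1=0^{*}$; the eccentricity relation $\sqsubseteq$ is invariant because Theorem \ref{theorem:threeconditions}(II) defines it purely from the prime-quotient orders of the MV-algebra, which $\nu$ transports; and centrality of $\phi^{\mathfrak M_m/\mathfrak I}$ is invariant because $\nu$ is induced by a C*-isomorphism (equivalently, by Theorem \ref{theorem:threeconditions}(I), centrality means membership in $\Boole(E(\mathfrak A))$, which $\nu$ respects, being defined by the equation $a\oplus a=a$). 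Hence in every case $\sigma\in\mathsf P$ for $\mathfrak M_m/\mathfrak I$ iff $\rho(\sigma)\in\mathsf P$ for $\mathfrak M_n/\mathfrak J$, which is the asserted polytime reduction; the closing ``in particular'' follows by composing a polytime decision procedure for $\mathsf P$ over $\mathfrak M_n/\mathfrak J$ with $\rho$, and symmetrically in the other direction. I expect the only genuinely delicate points to be the uniformity of the constant $c$ (secured by finite generation) and the isomorphism-invariance of $\sqsubseteq$ and of centrality; the remaining steps are bookkeeping.
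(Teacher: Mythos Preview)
Your proof is correct and follows essentially the same route as the paper's: fix once and for all an $m$-tuple of $n$-variable terms representing the images of the generators under the MV-isomorphism between the two presentations, and reduce by substitution, with the fixed constant $c$ controlling the blow-up. The only difference is that the paper obtains the substituting terms via the polyhedral duality of \cite[8.7]{mun11} (the MV-isomorphism corresponds to a $\mathbb Z$-homeomorphism $\eta$ of the underlying rational polyhedra, and the coordinates of $\eta$ are McNaughton functions, hence terms $\vec\psi_\eta$), whereas you extract them directly from surjectivity of the term interpretation onto $\McNn/\mathfrak j$; your justification is more elementary and avoids the geometric machinery, but the reduction itself and the verification for each $\mathsf P_i$ are the same.
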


\begin{proof}
A  {\it rational polyhedron}
$P\subseteq [0,1]^{n}$
is  the union of finitely many simplexes 
$T_i\subseteq [0,1]^{n}$ with  rational vertices.
Two  rational polyhedra $P$ and $Q$
are   {\it $\mathbb Z$-homeomorphic}
 if $Q$ is the image of $P$
under a piecewise linear homeomorphism $\eta$ 
such that all linear pieces of $\eta$ and $\eta^{-1}$ 
have integer coefficients. 
Theorem \ref{theorem:spectral}
yields  ideals $\mathfrak i,\mathfrak j$
such that 
$E(\mathfrak A) \cong \McNm/\mathfrak i
\cong \McNn/\mathfrak j.$
Elliott's
classification, and the 
  $\mathbb Z$-homeomorphism
in  \cite[8.7]{mun11}, (coded by an $m$-tuple $\vec\psi_\eta$ of
$n$-variable terms)   
yields a  polytime  reduction 
$\rho'\colon \phi\in \mathsf{TERM}_m\mapsto
 \phi(\vec\psi_\eta)
\in \mathsf{TERM}_n$ of
problem $\mathsf P_i$  for 
$ \mathfrak M_m/\mathfrak I$ 
to  $\mathsf P_i$ for   $\mathfrak M_{n}/\mathfrak J$,
and a polytime reduction $\rho''$ in the opposite direction. 
\end{proof}

\begin{theorem}
\label{theorem:mutual}
Fix an  AF$\ell$-algebra $\mathfrak A$ whose 
 Elliott monoid is  finitely generated.
Then for  all    $i,j\in  \{1,\dots,4\}$
problem     $\mathsf{P}_i$ for  $\mathfrak A$
is polytime reducible to problem 
  $\mathsf{P}_j$.
Thus in particular,  $\mathsf{P}_i$ for $\mathfrak A$  
is polytime  
 iff so is  $\mathsf{P}_j$ for $\mathfrak A$.
\end{theorem}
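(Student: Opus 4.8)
The plan is to route every reduction through the zero problem $\mathsf{P}_4$: I would produce polytime reductions $\mathsf{P}_i\to\mathsf{P}_4$ and $\mathsf{P}_4\to\mathsf{P}_i$ for each $i\in\{1,2,3\}$, and then get $\mathsf{P}_i\to\mathsf{P}_j$ for all $i,j\in\{1,\dots,4\}$ from the fact that polytime reductions compose. First I fix a presentation $\mathfrak A\cong\mathfrak M_n/\mathfrak I$ with $n<\omega$ --- available since $E(\mathfrak A)$ is finitely generated, and harmless by Proposition \ref{proposition:zhomeo} --- and, via Theorem \ref{theorem:multi}(iv)--(v), identify $E(\mathfrak A)$ with an MV-algebra, so that $\preceq$ is its underlying order $\le$. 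Each reduction I construct has the same shape: parse the input string in linear time; if it is not a well-formed pair (resp.\ term) of $\mathsf{TERM}_n$, output a fixed string outside the target problem (the degenerate case $\mathfrak A=\{0\}$ being trivial); otherwise substitute the input term(s) into a fixed MV-term scheme and expand the abbreviations $\odot,\ominus,\wedge,\vee,\dist$ into $\{0,{}^*,\oplus\}$-terms. Such a map has linear length blow-up and is obviously polytime, and since $t(\phi,\psi)^{\mathfrak A}=t(\phi^{\mathfrak A},\psi^{\mathfrak A})$ holds for every MV-term $t$ (an immediate induction from the identities of Section \ref{section:algorithmic}), the correctness of each reduction comes down to an identity or equivalence in the MV-algebra $E(\mathfrak A)$.

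The reductions involving $\mathsf{P}_1$ and $\mathsf{P}_2$ rest on the standard MV-algebra facts $a=b\iff\dist(a,b)=0$ and $a\le b\iff a\odot b^*=0$: I would take $(\phi,\psi)\mapsto\dist(\phi,\psi)$ to reduce $\mathsf{P}_1$ to $\mathsf{P}_4$ and $(\phi,\psi)\mapsto(\phi^*\oplus\psi)^*$ to reduce $\mathsf{P}_2$ to $\mathsf{P}_4$, while $\phi\mapsto(\phi,0)$ reduces $\mathsf{P}_4$ to both $\mathsf{P}_1$ (since $0^{\mathfrak A}=0$) and $\mathsf{P}_2$ (since $0$ is the least element of $E(\mathfrak A)$). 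For $\mathsf{P}_4\to\mathsf{P}_3$ I would first note that $a\sqsubseteq 0$ iff $a=0$: for every prime ideal $\mathfrak p$ of $E(\mathfrak A)$ the chain $E(\mathfrak A)/\mathfrak p$ is nontrivial, so $0/\mathfrak p<1/\mathfrak p=0^*/\mathfrak p$ and the first implication in the definition of $\sqsubseteq$ forces $a/\mathfrak p\le 0$; as this holds at every prime, \eqref{equation:subdirect} gives $a\in\bigcap\Spec(E(\mathfrak A))=\{0\}$. Hence $\phi\mapsto(\phi,0)$ also reduces $\mathsf{P}_4$ to $\mathsf{P}_3$.

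The heart of the matter --- and the step I expect to be the main obstacle --- is the reduction $\mathsf{P}_3\to\mathsf{P}_4$, i.e.\ capturing $x\sqsubseteq y$ by a single MV-equation. Writing $a\ominus b=a\odot b^*$, I would introduce the term
$$
\tau(x,y)\ :=\ \bigl[(x\ominus y)\wedge(y^*\ominus y)\bigr]\ \vee\ \bigl[(y\ominus x)\wedge(y\ominus y^*)\bigr]
$$
and prove that $x\sqsubseteq y$ iff $\tau(x,y)=0$ in $E(\mathfrak A)$. By \eqref{equation:subdirect} it suffices to verify, for every prime $\mathfrak p$ and the associated MV-chain $D=E(\mathfrak A)/\mathfrak p$, that the image of $\tau(x,y)$ in $D$, which equals $\tau(x/\mathfrak p,\,y/\mathfrak p)$, vanishes exactly when the two implications defining $\sqsubseteq$ hold at $\mathfrak p$. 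Here I would use that in a chain $a\wedge b=0$ iff $a=0$ or $b=0$, and $a\ominus b=0$ iff $a\le b$: the first bracket then vanishes in $D$ iff $x/\mathfrak p\le y/\mathfrak p$ or $y^*/\mathfrak p\le y/\mathfrak p$, that is, iff ``$y/\mathfrak p<y^*/\mathfrak p$ implies $x/\mathfrak p\le y/\mathfrak p$'', and symmetrically the second bracket vanishes iff the second implication holds, while $\tau$ vanishes in $D$ iff both brackets do. Granting this, $(\phi,\psi)\mapsto\tau(\phi,\psi)$ (abbreviations expanded) is the desired polytime reduction. After isolating $\tau$, what remains is just this chainwise case analysis together with routine MV-algebra bookkeeping, and I do not foresee further difficulties.
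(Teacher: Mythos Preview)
Your proposal is correct and follows essentially the same route as the paper: every reduction is routed through $\mathsf{P}_4$, using $\dist$, $\ominus$, and pairing with $0$ for $\mathsf{P}_1$, $\mathsf{P}_2$ and the converse directions, exactly as in the paper, and the reduction $\mathsf{P}_4\to\mathsf{P}_3$ via $a\sqsubseteq 0\Leftrightarrow a=0$ is the paper's as well. For the key step $\mathsf{P}_3\to\mathsf{P}_4$ the paper uses the term $((y\ominus y^*)\vee(x\ominus y))\wedge((y^*\ominus y)\vee(y\ominus x))$, which is the lattice-dual of your $\tau(x,y)$ and in fact coincides with it in every MV-algebra by distributivity together with the identity $(u\ominus v)\wedge(v\ominus u)=0$.
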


\begin{proof} 
Following \cite[p.8]{cigdotmun},  for every MV-algebra
$A$ we define the operation $\ominus\colon A^2\to A$ by  
$x\ominus y=
x\odot y^*.$
 By   \cite[Lemma 1.1.2]{cigdotmun},
\begin{equation}
\label{equation:order}
\mbox{$x\leq y$\,\,\, iff\,\,\,
$x\ominus y=0.$}
\end{equation}
For all input terms $\alpha, \beta$, let us write for short
$\alpha^{\mathfrak A}=a$ and $\beta^{\mathfrak A}=b.$
By definition of the $\sqsubseteq$-order,\,\,\,
  $a\sqsubseteq b$\,\,\, iff\,\,\,
 for every  prime ideal $\mathfrak p$ of 
 the MV-algebra $E(\mathfrak A)$ the folllowing holds: 
 \begin{equation}
 \label{equation:dmanga}
 (b/\mathfrak p<b^*/\mathfrak p \,\,\mbox{implies}\,\,   a/\mathfrak p\leq 
 b/\mathfrak p)
\,\,\,\,\,\, \mbox{and}\,\,\,\,\,\,  
 (b /\mathfrak p>b^*/\mathfrak p  \,\,\mbox{implies}\,\,   a /\mathfrak p
 \geq b/\mathfrak p).
 \end{equation}
This is equivalent to 
$  (b/\mathfrak p\leq  b^*/\mathfrak p\,\,\,\,\mbox{and} \,\,\,\,  
  a/\mathfrak p\leq b/\mathfrak p)
\,\,\,\,\mbox{or}\,\,\,\, 
 (b /\mathfrak p\geq  b^*/\mathfrak p \,\,\,\,\mbox{and}\,\,\,\,   
 a /\mathfrak p\geq b/\mathfrak p),
 $
 which by \eqref{equation:order} can be reformulated as 
 \begin{equation}
 \label{equation:martede}
  (b/\mathfrak p\ominus  b^*/\mathfrak p=0=  a/\mathfrak p\ominus
   b/\mathfrak p)\, \,\,\mbox{or} \,\,\,  
 (b^*/\mathfrak p\ominus b /\mathfrak p=0=    b/\mathfrak p\ominus a /\mathfrak p).
\end{equation}
By definition of the lattice operations in the MV-chain   
 $E(\mathfrak A)/\mathfrak p$, (Theorem \ref{theorem:multi}(iii)),
 \eqref{equation:dmanga}-\eqref{equation:martede}
 equivalently state 
 $
 ((b/\mathfrak p\ominus  b^*/\mathfrak p) \vee  (a/\mathfrak p
 \ominus b/\mathfrak p)) \wedge 
((b^*/\mathfrak p\ominus b /\mathfrak p) \vee   
(b/\mathfrak p\ominus a /\mathfrak p))=0.
$
By  \eqref{equation:subdirect}, the  Elliott monoid
$E(\mathfrak A)$ satisfies: 
 $
  a\sqsubseteq b \,\,\,\mbox{iff}\,\,\, 
  ((b \ominus  b^* ) \vee  (a \ominus b )) \wedge 
((b^* \ominus b  ) \vee   (b \ominus a  ))=0.
$
By
 definition of $a,b$, 
$
 a\sqsubseteq b
\,\,\,\mbox{iff}\,\,\,  (((\beta \ominus  \beta^* ) \vee  (\alpha \ominus \beta )) \wedge 
((\beta^* \ominus \beta  ) \vee 
  (\beta \ominus \alpha  )))^{\mathfrak A}=0.
$
We have just constructed 
 a polytime reduction of the eccentricity
 problem  $\mathsf{P}_3$ for  $\mathfrak A$ to the zero problem
 $\mathsf{P}_4$ for $\mathfrak A$. 
A converse polytime reduction is immediately obtained by
noting that 
$ \phi^\mathfrak A=[0]$ iff $ \phi^\mathfrak A\sqsubseteq [0]$,
whence $\mathsf{P}_4$ is a subproblem of 
$\mathsf{P}_3$.

\smallskip
   A polytime reduction of the order problem
   $\mathsf{P}_2$ to 
the zero problem $\mathsf{P}_4$
 for $\mathfrak A$  trivially follows by noting that  
  $ \phi^\mathfrak A\preceq  \psi^\mathfrak A$ iff  
$(\phi \ominus \psi)^\mathfrak A =0$.
Conversely, 
a polytime reduction of $\mathsf{P}_4$ to 
$\mathsf{P}_2$ for $\mathfrak A$   follows by noting that
  $ \phi^\mathfrak A=0$ iff $ \phi^\mathfrak A\preceq 0$.

\smallskip
    A polytime reduction of the word problem
    $\mathsf{P}_1$ to 
the zero problem $\mathsf{P}_4$ for $\mathfrak A$  follows from 
  $ \phi^\mathfrak A= \psi^\mathfrak A$ iff 
  $\dist( \phi^\mathfrak A, \psi^\mathfrak A)=0$
  iff  $((\phi\ominus \psi)\oplus(\psi\ominus \phi))^\mathfrak A=0.$
Conversely, a polytime reduction of  
$\mathsf{P}_4$  to $\mathsf{P}_1$
  for $\mathfrak A$  
 trivially  follows from the fact that
 the former problem
is a special case of the latter.
\end{proof}

Concerning the central projection problem  $\mathsf{P}_5$
we have:

  \begin{theorem}
\label{theorem:semi-mutual}
Arbitrarily fix an AF$\ell$-algebra $\mathfrak A$
whose Elliott monoid 
$E(\mathfrak A)$ is finitely generated.

\smallskip
(i)  There is a polytime reduction of the central
projection  problem  $\mathsf{P}_5$ to the zero problem
$\mathsf{P}_4$  for $\mathfrak A.$

\smallskip
(ii) The converse polytime reduction exists if 
$\mathfrak A$ has no quotient isomorphic  to $\mathbb C$,
or if  $\mathfrak A$ is simple.
\end{theorem}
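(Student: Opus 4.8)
The plan splits along the two directions. For (i), I would start from Theorem~\ref{theorem:threeconditions}(I): a projection $p$ of $\mathfrak A$ is central iff $[p]\in\Boole(E(\mathfrak A))$, and, as recorded in the introduction, this holds iff $[p]\wedge[1-p]=0$, i.e.\ iff $[p]\wedge[p]^{*}=0$. Writing the lattice operation $\wedge$ out through $\oplus$ and ${}^{*}$ by the formulas of Theorem~\ref{theorem:multi}(iii) gives $[p]\wedge[p]^{*}=\bigl(([p]\oplus[p])^{*}\oplus[p]\bigr)^{*}$. Hence the map $\rho$ that sends a well-formed term $\phi$ to the term $\bigl((\phi\oplus\phi)^{*}\oplus\phi\bigr)^{*}$, and every other string to a fixed non-term, is computable in linear time and satisfies $\phi\in\mathsf P_{5}$ iff $\rho(\phi)\in\mathsf P_{4}$. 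This is the required reduction of $\mathsf P_{5}$ to $\mathsf P_{4}$; note that it is even independent of the presentation of $\mathfrak A$.

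For (ii) I would first dispose of the case $\mathfrak A\cong\mathbb C$: then $\mathfrak A$ is finite-dimensional, all of $\mathsf P_{1},\dots,\mathsf P_{7}$ are decidable in polynomial time, and $\mathsf P_{4}$ reduces to $\mathsf P_{5}$ by deciding $\mathsf P_{4}$ outright and outputting a fixed element of $\mathsf P_{5}$, resp.\ of $\mathcal A^{*}\setminus\mathsf P_{5}$, according to the answer. This also covers the only situation that the hypothesis ``$\mathfrak A$ is simple'' adds beyond ``$\mathfrak A$ has no quotient isomorphic to $\mathbb C$'', because a simple $\mathfrak A\not\cong\mathbb C$ has only the quotients $0$ and $\mathfrak A$, neither isomorphic to $\mathbb C$. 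So assume henceforth that $\mathfrak A$ has no quotient isomorphic to $\mathbb C$. Using Proposition~\ref{proposition:zhomeo}, fix a presentation $\mathfrak A=\mathfrak M_{n}/\mathfrak I$, put $\gamma:=\bigvee_{i=1}^{n}\bigl(X_{i}\wedge X_{i}^{*}\bigr)\in\mathsf{TERM}_{n}$ (written out through $\oplus$ and ${}^{*}$), and define $\rho$ on well-formed terms by $\rho(\phi):=\gamma\wedge\phi$, and on all other strings by a fixed non-term. Since $\gamma$ is a fixed finite string, $\rho$ is computable in linear time.

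The heart of the argument is the claim that \emph{for every prime ideal $\mathfrak p$ of $E(\mathfrak A)$ the image $\gamma^{\mathfrak A}/\mathfrak p$ lies strictly between $0$ and $1$ in the totally ordered quotient $E(\mathfrak A)/\mathfrak p$ of Theorem~\ref{theorem:spectral}}. For the upper bound, in any nontrivial MV-chain one has $t\wedge t^{*}<1$ for every $t$ — else $t=1=t^{*}$, whence $t=1$ and $t=0$ — so the finite join $\gamma^{\mathfrak A}/\mathfrak p=\bigvee_{i}\bigl(X_{i}^{\mathfrak A}/\mathfrak p\wedge(X_{i}^{\mathfrak A}/\mathfrak p)^{*}\bigr)$, being a maximum of elements each strictly below $1$, is itself strictly below $1$; this needs no hypothesis. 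For the lower bound, $\gamma^{\mathfrak A}/\mathfrak p=0$ would force $X_{i}^{\mathfrak A}/\mathfrak p\in\Boole\bigl(E(\mathfrak A)/\mathfrak p\bigr)$ for all $i$; as these images generate $E(\mathfrak A)/\mathfrak p$ and $\Boole$ is a subalgebra, $E(\mathfrak A)/\mathfrak p$ would be a Boolean chain, hence $\{0,1\}$, making $\mathfrak p$ a maximal ideal with two-element quotient, which by Theorem~\ref{theorem:spectral} corresponds to a primitive ideal $\mathfrak P$ of $\mathfrak A$ with $\mathfrak A/\mathfrak P\cong\mathbb C$ — against the standing assumption.

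Granting the claim, the verification is immediate. If $\phi^{\mathfrak A}=0$ then $\rho(\phi)^{\mathfrak A}=\gamma^{\mathfrak A}\wedge 0=0\in\Boole(E(\mathfrak A))$, so $\rho(\phi)\in\mathsf P_{5}$. If $\phi^{\mathfrak A}\ne 0$, then by \eqref{equation:subdirect} some prime $\mathfrak p$ has $\phi^{\mathfrak A}/\mathfrak p\ne 0$; in the chain $E(\mathfrak A)/\mathfrak p$ one then gets $\rho(\phi)^{\mathfrak A}/\mathfrak p=(\gamma^{\mathfrak A}/\mathfrak p)\wedge(\phi^{\mathfrak A}/\mathfrak p)$, which is positive (both meetands are) and strictly below $1$ (since $\gamma^{\mathfrak A}/\mathfrak p<1$); by Theorem~\ref{theorem:threeconditions}(I) this forces $\rho(\phi)^{\mathfrak A}\notin\Boole(E(\mathfrak A))$, so $\rho(\phi)\notin\mathsf P_{5}$. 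Thus $\phi\in\mathsf P_{4}$ iff $\rho(\phi)\in\mathsf P_{5}$, and $\rho$ is the desired polytime reduction. The one genuine obstacle here is the displayed claim — producing a single term that is ``interior modulo every prime'' and recognising that the sole way it can fail is precisely a $\mathbb C$-quotient; once the witness $\bigvee_{i}(X_{i}\wedge X_{i}^{*})$ is isolated, the remaining polytime bookkeeping, and the separate elementary treatment of $\mathfrak A\cong\mathbb C$ (where this witness degenerates to $0$), are routine.
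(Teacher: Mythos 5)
Your proof is correct and follows the paper's argument essentially verbatim: part (i) uses the same characterization ``$\phi^{\mathfrak A}$ central iff $(\phi\wedge\phi^*)^{\mathfrak A}=0$'' with $\wedge$ expanded through $\oplus$ and ${}^*$, and part (ii) uses the same witness term and reduction $\phi\mapsto\phi\wedge\bigvee_{i=1}^n(X_i\wedge X_i^*)$, together with the same observation that this witness can only fail to be strictly interior modulo a prime ideal when the corresponding primitive quotient is $\mathbb C$. The only (harmless) difference is organizational: you isolate and directly prove the claim that the witness lies strictly between $0$ and $1$ in every prime quotient, whereas the paper embeds the same computation inside a proof by contradiction.
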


\begin{proof}
(i) As a matter of fact, with the notation of \eqref{equation:boole},
we have:
\begin{align*} 
 &\,\,\, \phi^{\mathfrak A}\,\, \mbox{is 
 the   equivalence class of a central 
 projection  in  $\mathfrak A$}\\
 \,\,\mbox{iff}\,\,&\,\,\,
\phi^{\mathfrak A}
  \mbox{ 
belongs to}\, 
 \Boole(E(\mathfrak A)), \mbox{ by  Theorem  \ref{theorem:threeconditions}}\\
  \,\,\mbox{iff}\,\,&\,\,\,
\phi^{\mathfrak A}\wedge (\phi^{\mathfrak A})^*=0,
  \mbox{   by \cite[Theorem 1.16]{cha},
  \,\,(with  $\wedge$ as in Theorem
  \ref{theorem:multi}(iii))}\\
    \,\,\mbox{iff}\,\,&\,\,\,
 (\phi \wedge  \phi^*)^{\mathfrak A}=0.
\end{align*}
The desired polytime reduction
transforms  $\phi$ into   $\phi \wedge  \phi^*$.

\bigskip
(ii) Let us first assume 
\begin{equation}
\label{equation:first-assumption}
\mbox{
$\mathfrak A$ has no quotient isomorphic to $\mathbb C$.}
\end{equation}
Pick an  $n=1,2,\dots$ and an ideal
$\mathfrak I$ of  $\mathfrak M_n$ such that
$\mathfrak A\cong \mathfrak M_n/\mathfrak I.$
By Proposition  \ref{proposition:zhomeo},
our proof will not depend on the actual choice
of $n$ and $\mathfrak I.$
Let $A=E(\mathfrak A)$ and   $\mathfrak i$ be the ideal of
$E(\mathfrak M_n)$ corresponding to $\mathfrak I$
by  Theorem \ref{theorem:spectral}(v),
%
%
$\mathfrak i= K_0(\mathfrak I)\cap E(\mathfrak M_n).$
We have 
$E(\mathfrak A)=E(\mathfrak M_n/\mathfrak I) =
E(\mathfrak M_n)/\mathfrak i = \McNn/\mathfrak i= A.$
Setting 
 $
 X_i^\mathfrak A= \pi_i/\mathfrak i, \,\,\,\,(i=1,\dots,n),
 $
 every $\phi\in \mathsf{TERM}_n$ is interpreted
as
$\phi^{\mathfrak A}= \phi^{\mathfrak M_n}/\mathfrak i.$

 \medskip
 \noindent
 To obtain the desired polytime  reduction, let us
 set
\begin{equation}
\label{equation:rho}
 \rho\colon \phi\mapsto
 \phi\wedge \bigvee_{i=1}^n (X_i\wedge X_i^*),\,\,\,\,\,\,
 \,\,\,\mbox{for each $\phi\in \mathsf{TERM}_n$}.  
\end{equation}
We must prove 
 \begin{equation}
 \label{equation:three}
\phi^{\mathfrak A}=0
 \Leftrightarrow \left(\phi\wedge \bigvee_{i=1}^n
 (X_i\wedge X_i^*)  \right)^\mathfrak A \in \{0,1\}.
 \end{equation}
 Indeed, by Theorem
 \ref{theorem:threeconditions},
 equation
  \eqref{equation:three} amounts to saying that
 $ \phi^{\mathfrak A}=0$ iff the term  $\rho(\phi)$
 given by 
    $\phi\wedge \bigvee_{i=1}^n (X_i\wedge X_i^*)$ 
    codes    the
      equivalence class of a central projection of 
  $\mathfrak A.$

The $(\Rightarrow)$   direction of   \eqref{equation:three}
  is trivial.  For the  $(\Leftarrow)$ direction,
  arguing by way of contradiction, let us assume
    \begin{equation}
 \label{equation:five=AH1}
 \left(\phi\wedge \bigvee_{i=1}^n
 (X_i\wedge X_i^*)  \right)^\mathfrak A
 \subseteq \{0,1\}
 \,\,\,\, \mbox{ and }\,\,\,\,
 \phi^{\mathfrak A} \not=0.
 \end{equation}

 \noindent
 By \eqref{equation:subdirect}, 
   for  some  prime ideal $\mathfrak p$ of $A$
  we have
 \begin{equation}
 \label{equation:six}
\phi^{\mathfrak A}/\mathfrak p\not=0
\mbox{  in the MV-chain  $  A/\mathfrak p$ }.
 \end{equation}
 On the other hand, every   prime ideal $\mathfrak q$
 satisfies 
${   \left(\phi\wedge \bigvee_{i=1}^n
 (X_i\wedge X_i^*)  \right)^\mathfrak A}/{\mathfrak q}\in \{0,1\}$
  in the MV-chain  
 $A/\mathfrak q$.
 In particular,
   \begin{equation}
  \label{equation:eight}
  \frac{ \left(\phi\wedge \bigvee_{i=1}^n
 (X_i\wedge X_i^*)  \right)^\mathfrak A}{\mathfrak p}\in \{0,1\}
 \mbox{  in the MV-chain  $  A/\mathfrak p$ }.
 \end{equation}
 
 \medskip
 \noindent
Another application of
  Theorem \ref{theorem:spectral} yields  a unique
 primitive ideal $\mathfrak P$ of $\mathfrak A$ such that
 $E(\mathfrak A/\mathfrak P)=E(\mathfrak A)/\mathfrak p$

\medskip 
 \noindent
 {\it Claim:} For each $i=1,\dots,n$ the interpretation
  $X_i^\mathfrak A/\mathfrak p$
  of the variable symbol $X_i$
 in   $\mathfrak A/\mathfrak P$, as well as in
 its Elliott monoid  
$\in A/\mathfrak p$,   is a trivial element  0 or 1.
 
 \medskip
 For otherwise (absurdum hypothesis), say without
 loss of generality
     \begin{equation}
 \label{equation:nine=AH2}
 X_1^\mathfrak A/\mathfrak p \notin \{0,1\},
 \mbox{ whence also }
 ( X_1^{*})^{\mathfrak A}/\mathfrak p \mbox{ does not belong to } \{0,1\}.
 \end{equation}
 By \eqref{equation:eight},
 \begin{eqnarray*}
 \{0,1\}&\ni&\frac{\left(\phi\wedge \bigvee_{i=1}^n
 (X_i\wedge X_i^*)   \right)^\mathfrak A}{\mathfrak p}\\
 {}&=& \frac{\phi^\mathfrak A\wedge \left( \bigvee_{i=1}^n
 (X_i\wedge X_i^*)\right)^\mathfrak A }{\mathfrak p}\\
  {}&=&\frac{\phi^\mathfrak A}{\mathfrak p} \wedge \bigvee_{i=1}^n
  \frac{(X_i\wedge X_i^*)^\mathfrak A}{\mathfrak p}.
 \end{eqnarray*}
 In the MV-chain $A/\mathfrak p$,  from
  \eqref{equation:nine=AH2} we get 
 $$
 0<\frac{(X_1\wedge X_1^*)^\mathfrak A}{\mathfrak p}<1 \mbox{ and 
 for each }j=2,\dots,n, \,\,
 \frac{(X_j\wedge X_j^*)^\mathfrak A}{\mathfrak p}<1.
 $$
 As a consequence,
 $
 0<\left(\bigvee_{i=1}^n \frac{(X_i\wedge X_i^*)^\mathfrak A}{\mathfrak p}\right)<1.
 $
From \eqref{equation:six}, 
we obtain
%
 $0<\frac{\phi^\mathfrak A}{\mathfrak p}\wedge \bigvee_{i=1}^n
  \frac{(X_i\wedge X_i^*)^\mathfrak A}{\mathfrak p}<1,$
 against \eqref{equation:five=AH1}.
 
 \medskip
 Having thus settled our claim, for each $i=1,\dots,n$
we have  $X_i^\mathfrak A/\mathfrak p\in \{0,1\}$ in the MV-chain
$A/\mathfrak p.$
Let $\mathfrak n$ be {\it the only} maximal ideal of 
$A$ above the prime ideal
 $\mathfrak p$,  
as given by  \cite[1.2.12]{cigdotmun}.
Then a fortiori,
$
\left\{\frac{X_i^\mathfrak A}{\mathfrak n},\frac{X_i^{*\mathfrak A}}{\mathfrak n}\right\}
=\{0,1\}\,\,\,\mbox{ in } \,\,\,\frac{A}{\mathfrak n}\,\,\,\,
\mbox{   for each }i=1,\dots,n.
$
Since the set of  elements $\{X_1^\mathfrak A/\mathfrak n,\dots,
X_n^\mathfrak A/\mathfrak n\}$ generates the simple MV-chain
$A/\mathfrak n$, we obtain
$
A/\mathfrak n=\{0,1\}=\mbox{the set of  trivial elements
of $A/\mathfrak n$.}
$
In correspondence with the maximal ideal $\mathfrak n$
of $A$, 
Theorem \ref{theorem:spectral}(v)   yields  
 a maximal ideal $\mathfrak N$ of $\mathfrak A$ such that
 $\mathfrak A/\mathfrak N\cong \mathbb C$.  This 
 contradicts our standing assumption
  \eqref{equation:first-assumption}, thus proving 
  \eqref{equation:three}.  The map 
  $\rho$ defined  in \eqref{equation:rho}
  is the desired polytime reduction of problem $\mathsf{P}_4$ to
  $\mathsf{P}_5$ for $\mathfrak A$ whenever $\mathfrak A$
  has no quotient isomorphic to $\mathbb C$.

\medskip
 
To conclude the proof, there remains to consider the case when
$\mathfrak A$ is simple.

\smallskip
\noindent
  If 
  $\mathfrak A\cong \mathbb C$ then
both    problems  $\mathsf{P}_4$ and $\mathsf{P}_5$
 are polytime, whence they are (trivially) polytime   
 reducible each to the other. 
On the other hand, if 
$\mathfrak A\ncong \mathbb C$,
 the assumed simplicity of
$\mathfrak A$ ensures that 
$\mathfrak A$ has no quotient isomorphic to $\mathbb C$.
The desired polytime reduction of problem $\mathsf{P}_4$ to
  $\mathsf{P}_5$ for $\mathfrak A$ is then obtained arguing
  as in the first part of the  proof of (ii). 
 \end{proof}

 \begin{corollary}
\label{corollary:emme}
Let $\mathfrak M$ be the universal AF- algebra of
\cite[$\S 8$]{mun-jfa}.

(i)  For each $i=1,\dots,5$, problem $\mathsf{P}_i$
  for $\mathfrak M$ 
is   coNP-com\-ple\-te.

\smallskip
(ii)  The nontriviality problem $\mathsf{P}_6$  
 is NP-complete.

\smallskip
(iii)  If   
there is a polytime reduction of    
$\mathsf{P}_i$  to  
$\mathsf{P}_6$  
then   NP\,\,{\rm=}\,\,coNP. 

\smallskip
(iv) The central nontrivial projection problem $\mathsf{P}_7$
  is (trivially) polytime.

\end{corollary}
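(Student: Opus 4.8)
The plan is to establish (i) by combining the reductions already proved in Theorems \ref{theorem:mutual} and \ref{theorem:semi-mutual} with a direct coNP-completeness argument for one anchor problem, most conveniently the zero problem $\mathsf P_4$ for $\mathfrak M$. First I would argue membership in coNP: by Theorem \ref{theorem:multi}(iv)–(v), $E(\mathfrak M)$ is the free MV-algebra on countably many generators, so $E(\mathfrak M_n) = \McNn$, and a term $\phi\in\mathsf{TERM}_n$ satisfies $\phi^{\mathfrak M}=0$ iff the McNaughton function $\phi^{\mathfrak M_n}$ is identically $0$ on $[0,1]^n$. Thus $\phi^{\mathfrak M}\neq 0$ iff there is a rational point $x\in[0,1]^n$ with $\phi^{\mathfrak M_n}(x)>0$; by standard bounds on McNaughton functions coded by terms, such an $x$ can be taken with denominators bounded by $2^{\|\phi\|}$ (hence polynomially many bits), and evaluating $\phi^{\mathfrak M_n}$ at a rational point is polytime in $\|\phi\|$. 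This puts the complement of $\mathsf P_4$ in NP, i.e. $\mathsf P_4\in\mathrm{coNP}$. Hardness follows because \L$_\infty$ (equivalently infinite-valued \luk\ logic) has a coNP-complete tautology problem — this is McNaughton/Mundici's classical result — and ``$\phi$ is a \luk\ tautology'' is exactly ``$(\phi^*)^{\mathfrak M}=0$'' after the obvious term manipulation, giving a polytime reduction from \luk\ tautology to $\mathsf P_4$. For the remaining $i\in\{1,2,3,5\}$: Theorem \ref{theorem:mutual} gives mutual polytime reductions among $\mathsf P_1,\dots,\mathsf P_4$ for any AF$\ell$-algebra with finitely generated Elliott monoid, and $\mathfrak M$ (though itself $\omega$-generated) reduces to its finitely generated truncations by Proposition \ref{proposition:zhomeo} applied componentwise — more precisely, any instance uses only finitely many variables, so it lives in some $\mathfrak M_n$, and the proof of Theorem \ref{theorem:mutual} applies verbatim. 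For $\mathsf P_5$: Theorem \ref{theorem:semi-mutual}(i) always gives $\mathsf P_5\leq_p\mathsf P_4$, and since $\mathfrak M$ has $\mathbb C$ as a quotient, I instead verify $\mathsf P_5\in\mathrm{coNP}$ directly via Theorem \ref{theorem:threeconditions}: $\phi^{\mathfrak M}$ is central iff $(\phi\wedge\phi^*)^{\mathfrak M}=0$, which is an instance of $\mathsf P_4$ — so $\mathsf P_5\leq_p\mathsf P_4$ gives membership, and coNP-hardness of $\mathsf P_5$ follows because over $\mathfrak M = \mathfrak M_\omega$ the element $\phi^{\mathfrak M}\wedge(\phi^{\mathfrak M})^*$ ranges over all of $E(\mathfrak M)$ as... actually the cleaner route is: $\phi^{\mathfrak M}=0$ iff $\phi^{\mathfrak M}$ is central \emph{and} $\phi^{\mathfrak M}\neq 1$; but to avoid $\mathsf P_6$ here I would instead note that for $\psi$ using a fresh variable $X_{n+1}$, the term $\phi\odot X_{n+1}$ has $(\phi\odot X_{n+1})^{\mathfrak M}$ central iff $\phi^{\mathfrak M}=0$ (if $\phi^{\mathfrak M}\neq 0$ then at a point where it is positive one can make $\phi\odot X_{n+1}$ take an intermediate value), giving a polytime reduction $\mathsf P_4\leq_p\mathsf P_5$, hence coNP-hardness.

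For (ii), the nontriviality problem $\mathsf P_6$ asks whether $\phi^{\mathfrak M}\notin\{0,1\}$, i.e. whether there exists a rational point $x$ with $0<\phi^{\mathfrak M_n}(x)<1$ — this is manifestly an NP statement by the same small-witness and polytime-evaluation facts used in (i). For NP-hardness I would reduce from the complement of \luk\ tautology (which is NP-complete, being the complement of a coNP-complete problem): given $\phi$, the formula $\phi$ is non-tautological iff $\phi^{\mathfrak M}\neq 1$; to turn ``$\neq 1$'' into ``$\notin\{0,1\}$'' I take the term $\phi'=\phi\oplus(\chi\wedge\chi^*)$ where $\chi$ involves a fresh variable $X_{n+1}$, so that $(\chi\wedge\chi^*)^{\mathfrak M}$ is a genuinely non-Boolean element of $E(\mathfrak M)$; then $\phi'^{\mathfrak M}$ takes values in $(0,1)$ at suitable points precisely when $\phi$ fails to be a tautology, while if $\phi$ is a tautology then $\phi'^{\mathfrak M}\equiv 1$. (The exact choice of padding term needs checking, but the principle — \L$_\infty$-satisfiability of ``$\phi$ is false'' reduces to non-triviality by adding a half-valued perturbation — is routine.)

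Part (iii) is then immediate: if $\mathsf P_i\leq_p\mathsf P_6$ for some $i\in\{1,\dots,5\}$, then by (i) the coNP-complete problem $\mathsf P_i$ reduces in polynomial time to the NP problem $\mathsf P_6$, so every coNP problem is in NP, whence $\mathrm{coNP}\subseteq\mathrm{NP}$ and by symmetry $\mathrm{NP}=\mathrm{coNP}$. Part (iv) is trivial as stated: by Theorem \ref{theorem:threeconditions}, $\phi^{\mathfrak M}$ is a \emph{nontrivial} central element iff $\phi^{\mathfrak M}\in\Boole(E(\mathfrak M))\setminus\{0,1\}$; but $E(\mathfrak M)$ is the \emph{free} MV-algebra, which has \emph{no} nontrivial idempotents — every McNaughton function with $f\oplus f=f$, i.e. $f\in\{0,1\}$ pointwise and continuous on the connected cube $[0,1]^\omega$, is constant — so $\mathsf P_7=\emptyset$ and the machine that rejects every input decides it in constant time.

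The main obstacle I anticipate is not any single reduction but assembling the bookkeeping cleanly: one must (a) confirm the small-model bound for McNaughton functions coded by terms of size $\|\phi\|$ (vertices of the linear pieces of a term-coded function have denominators at most $2^{\|\phi\|}$, a known fact from the literature on \L$_\infty$, e.g. \cite{mun11}), (b) cite or re-derive the coNP-completeness of the \luk\ tautology problem, and (c) pin down the precise padding terms making $\mathsf P_4$ and $\mathsf P_5$ (and $\mathsf P_6$) interreducible in the right directions, being careful that $\mathfrak M$ \emph{does} have $\mathbb C$ as a quotient so Theorem \ref{theorem:semi-mutual}(ii) does not apply and the $\mathsf P_4\leq_p\mathsf P_5$ direction must be obtained by an explicit construction rather than by invoking the earlier theorem.
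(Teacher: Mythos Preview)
The paper states this corollary without proof, so there is no explicit argument to compare against; the intended reading is that it follows from the coNP-completeness of the \L$_\infty$ tautology problem together with the reductions of Theorems \ref{theorem:mutual} and \ref{theorem:semi-mutual}. Your proposal is a correct and careful unpacking of exactly that implicit argument, and you have identified the one point that genuinely requires additional work beyond the paper's stated theorems.

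Specifically, you are right that $\mathfrak M$ has $\mathbb C$ as a quotient (evaluate at any vertex of the cube) and is not simple, so Theorem \ref{theorem:semi-mutual}(ii) does \emph{not} supply the reduction $\mathsf P_4\leq_p\mathsf P_5$ needed for coNP-hardness of $\mathsf P_5$. Your fresh-variable construction $\phi\mapsto\phi\odot X_{n+1}$ fills this gap: if $\phi^{\mathfrak M}=0$ the product is $0$ hence central, while if $\phi^{\mathfrak M_n}(x)=c>0$ then $(\phi\odot X_{n+1})(x,t)=\max(0,c+t-1)$ takes the value $c/2\in(0,1)$ at $t=1-c/2$ (and $1/2$ at $t=1/2$ if $c=1$), so the product is not $\{0,1\}$-valued and hence not central. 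Your padding for $\mathsf P_6$ via $\phi\oplus(X_{n+1}\wedge X_{n+1}^*)$ also works as written: if $\phi$ is a tautology the sum is identically $1$; otherwise at a point with $\phi(x)<1$ one can choose $X_{n+1}$ small enough to keep the sum strictly below $1$, while the fresh summand guarantees the sum is not identically $0$. Your argument for (iv) --- that $\Boole(\McNomega)=\{0,1\}$ by connectedness of the Hilbert cube --- is the standard and correct one. The remaining ingredients (small rational witnesses for McNaughton functions, coNP-completeness of \L$_\infty$-tautology) are well known and cited in \cite{mun11}; your observation that the syntactic reductions in the proof of Theorem \ref{theorem:mutual} do not actually use finite generation, so they apply to $\mathfrak M$ instance-by-instance, is also correct.
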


 \begin{corollary}
\label{corollary:goedel} 
Let $\mathfrak M_1$ be the ``Farey'' AF$\ell$-algebra of
\cite{mun-adv, boc, mun-lincei}.

\smallskip
Problems $\mathsf{P}_1,\dots, \mathsf{P}_7$
for $\mathfrak M_1$
 are decidable in polytime.
\end{corollary}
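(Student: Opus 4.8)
The plan is to exhibit the Farey AF$\ell$-algebra $\mathfrak M_1$ as a quotient $\mathfrak M_1 \cong \mathfrak M_1/\{0\}$ whose Elliott monoid is the free one-generator MV-algebra $E(\mathfrak M_1) = \McNone$, so that all seven problems reduce to deciding equalities and inequalities among McNaughton functions of one variable that are presented by \luk\ terms. Concretely, a one-variable \luk\ term $\phi$ codes a piecewise-linear function $\phi^{\mathfrak M_1}\colon [0,1]\to[0,1]$ with integer-slope pieces; the number of linear pieces, the positions of the breakpoints, and the integers involved are all bounded polynomially in $\|\phi\|$ (indeed the slopes are bounded by the term's syntactic size, and the breakpoints are rationals with denominators bounded by a power of $\|\phi\|$). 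Hence one can compute in polytime, from $\phi$, an explicit list of the linear pieces of $\phi^{\mathfrak M_1}$.

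First I would make precise this ``small witness'' bound: by induction on term complexity, the function $\alpha^{\mathfrak M_1}$ is linear with integer slope of absolute value $\le \|\alpha\|$ on each cell of a subdivision of $[0,1]$ into at most $c\cdot\|\alpha\|$ intervals with rational endpoints of denominator at most $2^{\|\alpha\|}$ — and in fact, because we are in \emph{one} variable, a sharper polynomial bound holds (the standard fact that one-variable McNaughton functions coded by terms of size $s$ have $O(s)$ pieces with breakpoints of denominator $O(s^2)$; cf.\ the regular-Schauder-hat analysis in \cite{mun11}). Once this is in place, I would observe: $\phi^{\mathfrak M_1}=0$ iff the computed list of pieces is identically zero (problem $\mathsf{P}_4$); $\phi^{\mathfrak M_1}\preceq\psi^{\mathfrak M_1}$ iff piecewise comparison over the common refinement holds (problem $\mathsf{P}_2$); $\phi^{\mathfrak M_1}=\psi^{\mathfrak M_1}$ similarly (problem $\mathsf{P}_1$); and central/$\sqsubseteq$-minimal reduces, via Theorem \ref{theorem:threeconditions} and the reduction $\rho$ of Theorem \ref{theorem:semi-mutual}(i), to checking $(\phi\wedge\phi^*)^{\mathfrak M_1}=0$, i.e.\ whether $\phi^{\mathfrak M_1}(x)\in\{0,1\}$ for all $x$ — again a finite piecewise check ($\mathsf{P}_5$, and $\mathsf{P}_3$ via Theorem \ref{theorem:mutual}). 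Problems $\mathsf{P}_6,\mathsf{P}_7$ are then immediate: $\mathsf{P}_6$ asks whether $\phi^{\mathfrak M_1}$ is not the constant $0$ and not the constant $1$, and $\mathsf{P}_7$ combines $\mathsf{P}_5$ and $\mathsf{P}_6$.

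Alternatively — and this is the shorter route I would actually write — I would invoke Theorem \ref{theorem:mutual} and Theorem \ref{theorem:semi-mutual} to reduce everything to $\mathsf{P}_4$ for $\mathfrak M_1$ (noting $\mathfrak M_1$ has no quotient isomorphic to $\mathbb C$, since every maximal ideal of the Farey MV-algebra has quotient a non-trivial subalgebra of $[0,1]\cap\mathbb Q$, so $\mathsf{P}_5\equiv_{\mathrm P}\mathsf{P}_4$), and then show $\mathsf{P}_4$ for $\mathfrak M_1$ is polytime by the McNaughton-function evaluation above; $\mathsf{P}_6,\mathsf{P}_7$ are handled separately but trivially. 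The main obstacle is the polynomial bound on the combinatorial complexity of $\phi^{\mathfrak M_1}$: a naive induction only gives the slope bound $\|\phi\|$ but an \emph{exponential} bound on the number of pieces, since $\oplus$ can in principle refine both subdivisions and truncation $\min(x+y,1)$ can add a breakpoint per piece. The fix specific to one variable is that all breakpoints of $\phi^{\mathfrak M_1}$ lie among the (polynomially many) rationals $p/q$ with $q\le\|\phi\|^2$ forced by McNaughton's theorem and the integer-slope constraint, so the number of genuinely distinct pieces is $O(\|\phi\|^2)$ regardless of how the term is built; establishing this bound carefully — rather than the trivial exponential one — is the crux, and it is exactly where one-variable-ness (as opposed to $\mathsf P$ for $\mathfrak M_n$, $n\ge 2$, which is coNP-complete by Corollary \ref{corollary:emme}) is essential.
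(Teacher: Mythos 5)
Your argument is correct and is essentially the intended one: since $E(\mathfrak M_1)=\McNone$, a term $\phi$ codes a one-variable McNaughton function whose slopes are integers of absolute value at most the number of variable occurrences in $\phi$, so by continuity every breakpoint is a rational whose denominator divides the difference of two adjacent slopes and is therefore at most $2\|\phi\|$; evaluating $\phi$ at the $O(\|\phi\|^2)$ Farey fractions of order $2\|\phi\|$ (each evaluation being integer arithmetic on numerators) decides $\mathsf P_4$ in polytime, and Theorems \ref{theorem:mutual} and \ref{theorem:semi-mutual}(i) then dispose of $\mathsf P_1,\mathsf P_2,\mathsf P_3,\mathsf P_5$, with $\mathsf P_6,\mathsf P_7$ being Boolean combinations of instances of these. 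You correctly identify the crux (ruling out an exponential blow-up in the number of pieces) and resolve it by the denominator bound rather than by naive structural induction. Two side-claims should, however, be deleted. First, $\mathfrak M_1$ \emph{does} have quotients isomorphic to $\mathbb C$: the maximal ideals of $\McNone$ are the sets $\{f\mid f(x)=0\}$ for $x\in[0,1]$, and at $x=0$ and $x=1$ the quotient MV-algebra is the two-element chain $\{0,1\}$, i.e., the corresponding AF quotient is $\mathbb C$; so Theorem \ref{theorem:semi-mutual}(ii) does not apply as you state it. Fortunately you only need the unconditional reduction $\mathsf P_5\le_{\mathrm P}\mathsf P_4$ of part (i), since you prove $\mathsf P_4$ polytime directly. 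Second, Corollary \ref{corollary:emme} asserts coNP-completeness only for $\mathfrak M=\mathfrak M_\omega$, not for $\mathfrak M_n$ with fixed $n\ge 2$; your parenthetical attribution is unsupported (and indeed the analogous Cramer-determinant bound on vertex denominators suggests the fixed-$n$ problems are polytime as well, so one-variable-ness is not where the contrast with coNP-hardness lies --- unboundedly many variables is).
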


 \begin{corollary}
 \label{corollary:es}
Problems $\mathsf{P}_1,\dots, \mathsf{P}_7$
are decidable in polynomial time
 for the following AF$\ell$-algebras:

\smallskip
{\rm(i)}   
The Effros-Shen algebra  $\mathfrak F_{\theta}$ for 
 $\theta$   a quadratic
irrational,  or $\theta = 1/{\rm e}$,  or 
$\theta\in \interval\setminus \mathbb Q$
  a real algebraic integer.

\smallskip
{\rm(ii)}
The Effros-Shen algebra  $\mathfrak F_{\theta}$ for
any irrational   $\theta\in [0,1]$ having the following property:
 There is a real  $\kappa>0$ 
 such that for  every $n=0,1,\dots$
the   sequence $[a_0,\dots,a_n]$ of partial quotients
 of  $\theta$ is computable (as a finite list of binary integers) 
 in less than $2^{\kappa n}$  steps.
 \end{corollary}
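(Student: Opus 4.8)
\smallskip
\noindent\emph{Sketch of the intended proof.}
The strategy is to collapse $\mathsf{P}_1,\dots,\mathsf{P}_7$ for $\mathfrak F_\theta$ onto the single zero problem $\mathsf{P}_4$, and then to decide $\mathsf{P}_4$ by evaluating the McNaughton function coded by the input term at the irrational point $\theta$, calling on the continued fraction of $\theta$ only to settle finitely many sign comparisons. For the setup, recall that $\mathfrak F_\theta$ is a simple AF-algebra with $K_0(\mathfrak F_\theta)\cong(\mathbb Z+\mathbb Z\theta,\ (\mathbb Z+\mathbb Z\theta)\cap[0,\infty),\ 1)$, a \emph{totally ordered} unital $\ell$-group; hence it is an AF$\ell$-algebra and, by Theorem~\ref{theorem:multi}(iv)--(v), its Elliott monoid is the MV-chain $E(\mathfrak F_\theta)=\Gamma(K_0(\mathfrak F_\theta))=(\mathbb Z+\mathbb Z\theta)\cap\interval$, which is generated as an MV-algebra by the single element $\theta$. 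Presenting $\mathfrak F_\theta\cong\mathfrak M_1/\mathfrak I_\theta$, where $\mathfrak I_\theta$ is the maximal ideal of the Farey algebra $\mathfrak M_1$ corresponding under Theorem~\ref{theorem:spectral} to $\mathfrak m_\theta=\{f\in\McNone\mid f(\theta)=0\}$, the coding conventions of Section~\ref{section:algorithmic} yield, for each one-variable term $\phi$, the identity $\phi^{\mathfrak F_\theta}=f_\phi(\theta)$, with $f_\phi=\phi^{\mathfrak M_1}\in\McNone$ the McNaughton function coded by $\phi$; so $\mathsf{P}_4$ for $\mathfrak F_\theta$ is exactly the problem of deciding $f_\phi(\theta)=0$ on input $\phi$. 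Since $E(\mathfrak F_\theta)$ is finitely (indeed singly) generated and $\mathfrak F_\theta$ is simple, Theorems~\ref{theorem:mutual} and \ref{theorem:semi-mutual}(ii) make $\mathsf{P}_1,\dots,\mathsf{P}_5$ pairwise polytime equivalent, and Proposition~\ref{proposition:zhomeo} makes this independent of the chosen presentation; moreover $\phi\in\mathsf{P}_6$ iff neither $\phi$ nor $\phi^*$ lies in $\mathsf{P}_4$ (in a chain, $\phi^{\mathfrak F_\theta}=1$ iff $(\phi^*)^{\mathfrak F_\theta}=0$), while $\Boole(E(\mathfrak F_\theta))=\{0,1\}$, so by Theorem~\ref{theorem:threeconditions} the only equivalence classes of central projections of $\mathfrak F_\theta$ are $0$ and $1$, whence $\mathsf{P}_7$ is the empty set of strings and is trivially polytime, as in Corollary~\ref{corollary:emme}(iv). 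Hence it suffices to decide $\mathsf{P}_4$ for $\mathfrak F_\theta$ in polynomial time.

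For $\mathsf{P}_4$ I would first pin down the geometry of $f_\phi$. A straightforward induction on terms shows that every affine piece of a one-variable term-definable McNaughton function has integer slope and integer constant term, and that under $\oplus$ the slopes merely add (truncation at $1$ only inserts the piece $\equiv 1$) while $^*$ reflects them; consequently every slope occurring in $f_\phi$, or in any subterm of $\phi$, has absolute value at most the number of occurrences of $X_1$ in $\phi$, so $\leq||\phi||$, every constant term has absolute value $\leq||\phi||+1$, and every breakpoint of such a function is a rational $p/q$ with $|p|,q\leq 2||\phi||$. As $\theta$ is irrational it is never a breakpoint, so each $f_\psi$ is affine near $\theta$; I would therefore evaluate $f_\phi(\theta)$ by a recursion returning, for each subterm $\psi$, the unique $(a,b)\in\mathbb Z^2$ with $f_\psi(\theta)=a\theta+b$. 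The cases $0,\,X_1,\,\psi^*$ are immediate; for $\alpha\oplus\beta$, having obtained $(a_1,b_1)$ and $(a_2,b_2)$, one returns $(a_1+a_2,b_1+b_2)$ or $(0,1)$ according as $(a_1+a_2)\theta+(b_1+b_2)\leq 1$ or not, the only nontrivial step being a single comparison of $\theta$ with the rational $(1-b_1-b_2)/(a_1+a_2)$ when $a_1+a_2\neq 0$ (and an integer comparison otherwise), a number of bit length $O(\log||\phi||)$ by the above bounds. At the root one obtains $f_\phi(\theta)=a\theta+b$, which is $0$ iff $a=b=0$, once more because $\theta\notin\mathbb Q$.

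The recursion performs at most $||\phi||$ such comparisons of $\theta$ against rationals of bit length $O(\log||\phi||)$, and by the elementary theory of continued fractions the sign of $\theta-p/q$ is determined once one knows the partial quotients $a_0,\dots,a_k$ of $\theta$ as far as the first index $k$ whose convergent denominator $q_k$ exceeds $q$; since $q_k$ grows at least like the Fibonacci numbers and here $q\leq 2||\phi||$, it is enough to take $k=O(\log||\phi||)$. This is the crux at which hypothesis (ii) is used: by assumption the list $a_0,\dots,a_k$ is computable in fewer than $2^{\kappa k}$ steps, and with $k=O(\log||\phi||)$ this is $2^{O(\kappa\log||\phi||)}=||\phi||^{O(\kappa)}$, i.e.\ polynomial in $||\phi||$ (for $\theta$ fixed). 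Indeed, the exponential allowance $2^{\kappa n}$ in the hypothesis is calibrated precisely so that logarithmically many partial quotients cost only polynomially much. Adding the $O(||\phi||)$ comparisons to the $\mathrm{poly}(||\phi||)$ bookkeeping of the recursion shows $\mathsf{P}_4$ for $\mathfrak F_\theta$ runs in polynomial time, which together with the first paragraph proves (ii).

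Part (i) then reduces to (ii) by verifying its hypothesis in each of the three cases. A quadratic irrational has an eventually periodic continued fraction (Lagrange), so its first $n$ partial quotients are listable in $O(n)$ steps; $1/{\rm e}$ has Euler's explicit continued fraction $[0;2,1,2,1,1,4,1,1,6,\dots]$, again yielding $n$ partial quotients in time polynomial in $n$; and for a real algebraic integer $\theta$ of degree $d$ one runs the continued fraction algorithm on the successive algebraic numbers $\theta_k=1/(\theta_{k-1}-a_{k-1})$ by root isolation, with Liouville's inequality $|\theta-p_k/q_k|\geq c\,q_k^{-d}$ bounding $\log q_k$ and hence the total bit length of $a_0,\dots,a_n$ by $O((d-1)^n)$, comfortably within $2^{\kappa n}$ once $\kappa$ is chosen large in terms of $d$. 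I expect the only genuine difficulty in turning this sketch into a full proof to be this last point---keeping the bit sizes of the minimal polynomials of the $\theta_k$ under control so that the $k$-th root-isolation step costs at most $\mathrm{poly}((d-1)^k)$---together with a careful formulation of the continued fraction sign-comparison lemma; everything in the first paragraph is routine given Theorems~\ref{theorem:mutual}--\ref{theorem:semi-mutual}.
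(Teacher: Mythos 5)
Your proposal is correct and follows what is evidently the intended route (the paper states this corollary without printed proof): collapse $\mathsf{P}_1,\dots,\mathsf{P}_7$ onto $\mathsf{P}_4$ via Theorems \ref{theorem:mutual} and \ref{theorem:semi-mutual}(ii) together with the triviality of $\Boole$ of an MV-chain, identify $\mathsf{P}_4$ for $\mathfrak F_\theta$ with deciding $f_\phi(\theta)=0$, and evaluate the local affine piece $a\theta+b$ at the irrational $\theta$ using only $O(\log\|\phi\|)$ partial quotients, which is exactly what the $2^{\kappa n}$ allowance in (ii) is calibrated for. The two points you flag as delicate (the convergent-based sign-comparison lemma and the Liouville-type control of the partial-quotient growth for real algebraic integers, which is what makes hypothesis (ii) subsume case (i)) are indeed the only places needing careful write-up, and both are standard.
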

 
 \begin{corollary}
\label{corollary:bl}
Problems $\mathsf{P}_1,\dots, \mathsf{P}_7$
are decidable in polynomial time
 for
 all   Behncke-Leptin \cite{behlep} algebras $\mathcal A_{m,n}$.
 \end{corollary}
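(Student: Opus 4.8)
The plan is to reduce every one of $\mathsf{P}_1,\dots,\mathsf{P}_7$ for $\mathcal{A}_{m,n}$ to the zero problem $\mathsf{P}_4$, and then to decide $\mathsf{P}_4$ by an explicit polyhedral computation that exploits the smallness of the dimension group of $\mathcal{A}_{m,n}$. By \cite{behlep} the algebra $\mathcal{A}_{m,n}$ has a finite ideal lattice and an explicitly described dimension group $K_0(\mathcal{A}_{m,n})$ of finite rank; in particular its Elliott monoid $E(\mathcal{A}_{m,n})$ is finitely generated, so Proposition~\ref{proposition:zhomeo} and Theorems~\ref{theorem:mutual} and~\ref{theorem:semi-mutual} all apply with $\mathfrak A=\mathcal{A}_{m,n}$. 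Hence it is enough to prove that $\mathsf{P}_4$ for $\mathcal{A}_{m,n}$ is decidable in polytime: then $\mathsf{P}_1,\mathsf{P}_2,\mathsf{P}_3$ follow by Theorem~\ref{theorem:mutual}, $\mathsf{P}_5$ follows by Theorem~\ref{theorem:semi-mutual}(i), and $\mathsf{P}_6,\mathsf{P}_7$ follow because the polytime-decidable subsets of $\mathcal{A}^*$ are closed under complement and finite intersection --- note that $\phi^{\mathfrak A}\not=1$ iff $(\phi^*)^{\mathfrak A}\not=0$, so membership in $\mathsf{P}_6$ is a Boolean combination of instances of $\mathsf{P}_4$ at $\phi$ and at $\phi^*$, while $\mathsf{P}_7=\mathsf{P}_5\cap\mathsf{P}_6$.

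Next I would fix a convenient presentation. Reading $K_0(\mathcal{A}_{m,n})$ off the Bratteli diagram of \cite{behlep}, one sees that its positive cone is rational polyhedral, i.e.\ $K_0(\mathcal{A}_{m,n})$ is a finitely presented unital $\ell$-group; therefore $E(\mathcal{A}_{m,n})$ is a finitely presented MV-algebra, and by the MV-algebraic theory of rational polyhedra (cf.\ \cite{mun11}) it is isomorphic to the MV-algebra of restrictions $f|_{P_{m,n}}$ of McNaughton functions $f\colon[0,1]^d\to[0,1]$, for a fixed dimension $d=d_{m,n}$ and an explicit rational polyhedron $P_{m,n}\subseteq[0,1]^d$. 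By Proposition~\ref{proposition:zhomeo} we may work once and for all in this presentation, with $X_i^{\mathcal{A}_{m,n}}=\pi_i|_{P_{m,n}}$; then $\mathsf{P}_4$ asks, on input $\phi\in\mathsf{TERM}_d$, \emph{whether the McNaughton function coded by $\phi$ vanishes identically on $P_{m,n}$}. (If the structure theory of \cite{behlep} instead presents $K_0(\mathcal{A}_{m,n})$ as a totally ordered group, then $E(\mathcal{A}_{m,n})$ is a singly generated Chang-style MV-chain and $\mathsf{P}_4$ is settled by a direct computation on normal forms; I set this case aside.)

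The algorithm for $\mathsf{P}_4$ is then as follows. Put $L=||\phi||$ and let $\widehat\phi$ be the McNaughton function coded by $\phi$. An induction on the length of terms shows that every linear piece of $\widehat\phi$ has integer gradient of $\ell_1$-norm at most $L$ --- since $\oplus$ adds gradients and capping at $1$ alters none --- and hence, because $\widehat\phi$ takes values in $[0,1]$ over $[0,1]^d$, has constant term of absolute value at most $L+1$; consequently every hyperplane bounding a linearity region of $\widehat\phi$ carries an integer equation whose coefficients have absolute value at most $2L+2$. In the fixed dimension $d$ there are therefore only polynomially many (in $L$) such candidate hyperplanes; intersecting their $d$-element subsets with one another and with the finitely many fixed hyperplanes bounding the simplices of $P_{m,n}$, and discarding the intersection points outside $P_{m,n}$, one obtains in polynomial time a set $V$ of polynomially many rational points whose denominators are polynomially bounded in $L$. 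Now $\widehat\phi$ is affine on each cell of the polyhedral subdivision of $P_{m,n}$ cut out by its own linearity-region boundaries, each such cell is a bounded convex polytope whose vertices lie in $V$, and an affine function vanishing at all vertices of such a polytope vanishes throughout it; since these cells cover $P_{m,n}$, we conclude that $\widehat\phi\equiv 0$ on $P_{m,n}$ iff $\widehat\phi$ vanishes at every point of $V$. Finally $\widehat\phi$ is evaluated at a rational point with denominator $D$ using $O(L)$ rational operations on numbers with denominator dividing $D$, because $x\oplus y=\min(1,x+y)$ and $x^*=1-x$ preserve denominators; this takes time polynomial in $L+\log D$, hence in $L$. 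Checking $\widehat\phi$ at the polynomially many points of $V$ therefore decides $\mathsf{P}_4$ in polytime, and the corollary follows.

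The main obstacle is the geometric estimate just used: one must be sure that the piecewise-linear complexity of the function coded by a length-$L$ term --- the number of its linearity regions and the heights of their vertices --- remains polynomial in $L$ once the dimension $d$ is held fixed, which is precisely what the elementary gradient/constant-term induction provides. A preliminary, but genuinely routine, point is the explicit reading of $K_0(\mathcal{A}_{m,n})$, and hence of $P_{m,n}$, from \cite{behlep}; since the ideal lattice of $\mathcal{A}_{m,n}$ is finite, $P_{m,n}$ is elementary --- a point or a segment for small $m,n$ --- so this computation can in fact be carried out by hand.
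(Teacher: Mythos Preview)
The paper states this corollary without proof, so there is no argument to compare against; what follows is an assessment of your proposal on its own merits.

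Your reduction of $\mathsf{P}_1,\dots,\mathsf{P}_7$ to $\mathsf{P}_4$ via Theorem~\ref{theorem:mutual} and Theorem~\ref{theorem:semi-mutual}(i), together with the Boolean-combination remark for $\mathsf{P}_6$ and $\mathsf{P}_7$, is correct and clean. Your polyhedral algorithm for $\mathsf{P}_4$ is also correct \emph{when it applies}: if $E(\mathfrak A)\cong\McN(P)$ for a fixed rational polyhedron $P\subseteq[0,1]^d$, then the gradient bound, the $O(L^{d+1})$ count of candidate bounding hyperplanes, and the evaluation at the resulting vertices do give a polytime decision procedure.

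The gap is that this branch does not apply to $\mathcal A_{m,n}$. The Behncke--Leptin algebras have a two-point primitive ideal space with one primitive ideal contained in the other, so by Theorem~\ref{theorem:spectral} the Elliott monoid $E(\mathcal A_{m,n})$ is a non-simple MV-chain; concretely, $K_0(\mathcal A_{m,n})$ is $\mathbb Z$ lexicographically over $\mathbb Z$ with an order unit determined by $m,n$. The positive cone of $\mathbb Z\,\mathrm{lex}\,\mathbb Z$ is \emph{not} rational polyhedral, and the resulting MV-chains (Chang's algebra $\Gamma(\mathbb Z\,\mathrm{lex}\,\mathbb Z,(1,0))$ is the prototype) are finitely generated but \emph{not} finitely presented, hence not of the form $\McN(P)$ for any rational polyhedron $P$. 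You notice exactly this possibility and then write ``I set this case aside''---but it is the case that actually occurs, so what you have set aside is the entire content of the corollary, and the polyhedral algorithm solves a problem that does not arise here.

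The fix is to carry out the computation you dismissed. Present $E(\mathcal A_{m,n})$ as $\Gamma(\mathbb Z\,\mathrm{lex}\,\mathbb Z,u)$ with the explicit order unit $u$ and explicit generators read off from \cite{behlep}; then evaluate $\phi$ symbolically in $\mathbb Z\,\mathrm{lex}\,\mathbb Z$. A straightforward induction on subterms, using $x\oplus y=(x+y)\wedge u$ and $x^*=u-x$, shows that every subterm of a length-$L$ term evaluates to a pair of integers each of absolute value $O(L)$, so $\phi^{\mathcal A_{m,n}}$ is computed, and compared with $0$, in time polynomial in $L$. This is precisely the ``direct computation on normal forms'' you allude to; once it is written out, the corollary follows.
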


 \bibliographystyle{plain}

\end{document}